\documentclass[a4paper,UKenglish,cleveref, autoref, thm-restate,authorcolumns]{lipics-v2019}


\bibliographystyle{plainurl}

\title{Process Symmetry in Probabilistic Transducers} 


\author{Shaull Almagor}{Computer Science Department, Technion, Israel }{shaull@cs.technion.ac.il}{https://orcid.org/0000-0001-9021-1175}{Supported by a European Union's Horizon 2020 research and innovation programme under the Marie Sk{\l}odowska-Curie grant agreement No 837327.}

\authorrunning{S. Almagor} 

\Copyright{S. Almagor} 

\ccsdesc[500]{Theory of computation~Verification by model checking}
\ccsdesc[300]{Theory of computation~Abstraction}
\ccsdesc[300]{Theory of computation~Concurrency}

\keywords{Symmetry, Probabilistic Transducers, Model Checking, Permutations} 

\category{} 

\relatedversion{} 

\supplement{}



\nolinenumbers 


\EventEditors{John Q. Open and Joan R. Access}
\EventNoEds{2}
\EventLongTitle{42nd Conference on Very Important Topics (CVIT 2016)}
\EventShortTitle{CVIT 2016}
\EventAcronym{CVIT}
\EventYear{2016}
\EventDate{December 24--27, 2016}
\EventLocation{Little Whinging, United Kingdom}
\EventLogo{}
\SeriesVolume{42}
\ArticleNo{23}

\usepackage{todonotes}
\usepackage{bbm}
\usepackage[basic]{complexity}

\newcommand{\IOo}{(2^{I\cup O})^\omega}

\newcommand{\Is}{(2^{I})^*}

\newcommand{\IOp}{(2^{I\cup O})^+}
\newcommand{\Ip}{(2^{I})^+}
\newcommand{\Op}{(2^{O})^+}

\newcommand{\tIO}{2^{I\cup O}}
\newcommand{\tI}{2^{I}}
\newcommand{\tO}{2^{O}}

\newcommand{\tup}[1]{\langle #1\rangle}
\newcommand{\ST}{\, :\, }
\renewcommand{\vec}[1]{{\mathbf{#1}}}

\newcommand{\cA}{\mathcal{A}}
\newcommand{\cB}{\mathcal{B}}
\newcommand{\cN}{\mathcal{N}}

\newcommand{\cT}{\mathcal{T}}
\newcommand{\cS}{\mathcal{S}}

\newcommand{\bbN}{\mathbb{N}}

\newcommand{\bbE}{\mathbb{E}}

\newcommand{\dirac}[1]{\mathbbm{1}[{#1}]}

\newcommand{\lab}{\boldsymbol{\ell}}
\newcommand{\dist}{\Delta}

\newcommand{\runs}{\mathsf{runs}}
\newcommand{\supp}{\mathsf{Supp}}

\newcommand{\sinit}{s_{\rm init}}
\newcommand{\smid}{s_{\rm mid}}

\newcommand{\num}{\boldsymbol{\#}}
\newcommand{\park}{\mathfrak{P}}
\newcommand{\rew}{\mathsf{R}}

\begin{document}

\maketitle

\begin{abstract}
Model checking is the process of deciding whether a system satisfies a given specification. Often, when the setting comprises multiple processes, the specifications are over sets of input and output signals that correspond to individual processes. Then, many of the properties one wishes to specify are symmetric with respect to the processes identities. 
In this work, we consider the problem of deciding whether the given system exhibits symmetry with respect to the processes' identities. When the system is symmetric, this gives insight into the behaviour of the system, as well as allows the designer to use only representative specifications, instead of iterating over all possible process identities.

Specifically, we consider probabilistic systems, and we propose several variants of symmetry. We start with precise symmetry, in which, given a permutation $\pi$, the system maintains the exact distribution of permuted outputs, given a permuted inputs. We proceed to study approximate versions of symmetry, including symmetry induced by small $L_\infty$ norm, variants of Parikh-image based symmetry, and qualitative symmetry. For each type of symmetry, we consider the problem of deciding whether a given system exhibits this type of symmetry.
\end{abstract}

\section{Introduction}
\label{sec:intro}
A fundamental approach to automatic verification is \emph{model checking}~\cite{clarke2018model}, where we are given a system and a specification, and we check whether all possible behaviours of the system satisfy the specification. In model checking of \emph{reactive} systems, the specification is over sets of inputs $I$ and outputs $O$, and the system is an $I/O$ transducer, which takes sequences of inputs in $\tI$, and responds with an output in $\tO$. Then, model checking amounts to deciding whether for every input sequence, the matching output sequence generated by the transducer, satisfies the specification.

In practice, and especially in verification of concurrent systems, the input and output sets have some correspondence. For example, in an arbiter for $k$ processes, the inputs are typically $I=\{i_1,\ldots,i_k\}$, where $i_j$ is interpreted as ``a request was generated by Process $j$'', and the outputs are $O=\{o_1,\ldots,o_k\}$, where $o_j$ is interpreted as ``Process $j$ was granted access''.
In such cases, specification often end up having symmetric repetitions of a similar pattern. For example, we may wish to specify that in our arbiter, if Process $j_1$ generated a request before Process $j_2$, then a grant for $j_1$ should be given before a grant for $j_2$. However, in order to specify this in e.g., LTL (Linear Temporal Logic), we would have to explicitly write this statement for every pair of processes $j_1,j_2$. In the worst case, this could entail a blowup of $k!$ in the size of the formula, which incurs a further exponential blowup during model-checking algorithms.

This drawback, however, vanishes when we consider a \emph{symmetric} system: intuitively, a system is symmetric if permuting the input signals generates an output sequence of similarly permuted outputs. If a system satisfies this property, the it is enough to check whether it satisfies a representative specification. Indeed, any permutation of the processes is guaranteed to be equivalently satisfied.

Unfortunately, deterministic systems are unlikely to be completely symmetric, unless they are very naive (e.g., no grants are ever given). Indeed, tie-breaking in deterministic systems has an inherent asymmetry to it. In \emph{probabilistic} systems, however, no asymmetry is needed to break ties -- one can randomly choose a result. 

In this paper, we consider several notions of symmetry for probabilistic transducers, and their corresponding decision procedures. 
We start with the most restrictive version of symmetry, in which a transducer $\cT$ is symmetric under a permutation if the distribution of outputs that are generated for an input sequence $x$ is identical to the distribution of permuted outputs for the permuted input sequence (\cref{sec:symm}). We show that deciding whether a transducer is symmetric under a given permutation is decidable in polynomial time, and use basic results in group theory to give a similar result for deciding whether a transducer is symmetric under all permutations in a permutation group.

We then proceed to study approximate notions of symmetry, in order to capture cases where a system is not fully symmetric, but still may exhibit some symmetrical properties. On the negative side, using results on probabilistic automata, we show that an $L_\infty$ approximation variant of symmetry results in undecidability. On the positive side, we study two variants of symmetry that only take into account the Parikh image of the output signals, and we are able to use results on probabilistic automata with rewards to obtain efficient decidability of symmetry for these variants (\cref{sec:approx_sym}). 

Finally, we study a qualitative version of symmetry, which offers a coarse ``nondeterministic'' approximation of symmetry (\cref{sec:qual_sym}). We show that deciding whether a system is qualitatively symmetric is $\PSPACE$ complete.

The notion of symmetry is not only appealing for symmetry reductions in specification, but also as a standalone feature for the \emph{explainability} of model checking: standard model-checking algorithms can output a counterexample whenever a system does not satisfy its specification. This gives the designer insight as to what is wrong with either the system or the specification. On the other hand, when the result of model checking is that a system does satisfy its specification, no additional information is typically given. While this is ``good news'', a designer often wants some information as to ``why'' the system is correct. In particular, the designer may be concerned that the specifications were too easy to satisfy (e.g., in vacuous specifications \cite{ball2008vacuity}).
In this case, symmetry provides some information. Indeed, symmetry can be easily witnessed (as we show in~\cref{rmk:explainability}), so the designer can be convinced that any weakness of the specification, or any flaw of the system, is not biased toward a specific process, and will arise regardless of a specific order of processes. In addition, it shows that if the system satisfies e.g., liveness properties, then it satisfies them with the same ``good event intervals'' regardless of process identities.

\vspace*{-.3cm}
\paragraph*{Related work}
Process symmetry~\cite{clarke1996exploiting,ip1996better,emerson1996symmetry,lin2016regular} and more general symmetry reductions~\cite{sistla2000smc,spermann2008prob,wahl2010replication} have been studied since the 90's, typically in the context of alleviating the state-explosion problem. Symmetry can either be specified by the designer or user~[13,24,25], or detected automatically~[15,16,32]. 

A close approach to our work here is~\cite{lin2016regular}, where the problem of detecting process symmetries is studied. There, however, parametrized deterministic systems are studied, which shift the focus to the pattern of given symmetries (rather than our fixed-length permutations), and does not concern probabilities. 

Symmetry in the probabilistic setting was studied in~\cite{kwiatkowska2006symmetry,donaldson2005symmetry}, where model checking of probabilistic systems exploits known symmetries to avoid a state blowup by considering a quotient of the system under the symmetry. 

We remark that the works above typically focus on exact symmetries, and use them to reduce the state space, whereas the focus of this paper is to decide whether a symmetry exists, for various types of (not necessarily exact) symmetries, and to use the symmetry to avoid blowup in the specification, as well as to give the user insight regarding the correctness of the system.

Due to lack of space, some proofs appear in the appendix.

\vspace*{-10pt}
\section{Preliminaries}
\paragraph*{Probabilities and Distributions}
Consider a finite set $S$. A \emph{distribution} over $S$ is a function $\mu:S\to [0,1]$ such that $\sum_{s\in S}\mu(s)=1$. We denote the space of all distributions over $S$ by $\Delta(S)$. Given a distribution $\mu$, an \emph{event} is a subset\footnote{In general $E$ needs to be a \emph{measurable subset}, but since we only consider finite sets, any subset is measurable.} $E\subseteq S$, and its \emph{probability} under $\mu$ is $\Pr(E)=\sum_{e\in E}\mu(e)$.
For an element $s\in S$, the \emph{Dirac distribution} $\dirac{s}$ is given by $\dirac{s}(r)=\begin{cases}
1 & r=s,\\
0 & r\neq s.
\end{cases}$
The \emph{support} of a distribution $\mu$ is $\supp(\mu)=\{s\in S\ST \mu(s)>0\}$.

Given sets $S_1,\ldots,S_n$ and distributions $\mu_1,\ldots,\mu_n$ such that $\mu_i\in \Delta_i$ for every $1\le i\le n$, a natural \emph{product distribution}  $\mu$ is induced on the product space $S_1\times\cdots\times S_n$ where $\mu(s_1,\ldots,s_n)=\prod_{i=1}^{n}\mu_i(s_i)$.

\paragraph*{Probabilistic Transducers and Automata}
Consider two finite sets $I$ and $O$ of input and output signals, respectively. An $I/O$ \emph{probabilistic transducer} (henceforth just \emph{transducer}) is $\cT=\tup{I,O,S,s_0,\delta,\lab}$ where $S$ is a finite set of states, $s_0$ is an initial state, $\delta: S\times \tI\to \dist(S)$ is a transition function, assigning to each $($state,letter$)$ pair a distribution of successor states, and $\lab:S\to \tO$ is a labelling function. 

For a word $x=\vec{i}_1 \cdot \vec{i}_2 \cdots \vec{i}_n\in \Ip$, a \emph{run} of $\cT$ on $x$ is a sequence $\rho=q_0,q_1,\ldots,q_{n}$ where $q_0=s_0$, 
and the \emph{probability} of the run $\rho$ is $\prod_{j=0}^{n-1} \delta(q_j,\vec{i}_{j+1})(q_{j+1})$. Note that indeed this induces a probability measure $\mu$ on $\{s_0\}\times S^n$ via the product distribution.

A run $\rho$ is \emph{proper} if $\rho\in \supp(\mu)$. That is, if it has positive probability. We denote the space of proper runs by $\runs(\cT,x)$. In the following, we usually refer only to proper runs, and we omit the term ``proper'' when it is clear from context.
We extend the labelling function $\lab$ to runs by $\lab(\rho)=\lab(q_1)\cdot \lab(q_2)\cdots \lab(q_{n})$. Observe that we ignore the labelling of the initial state, and only consider nonempty words, to avoid edge cases.

For $x\in \Ip$ and $y\in \Op$ such that $|x|=|y|$, we denote by $\cT(x)=y$ the event $\{\rho\in \runs(\cT,x)\ST \lab(\rho)=y\}$. Thus, $\Pr(\cT(x)=y)$ is the probability that the output generated by $\cT$ on input $x$ is exactly $y$. We denote by $x\otimes y\in \IOo$ the combined word $(\vec{i}_1 \cup \vec{o}_1)\cdot (\vec{i}_2 \cup \vec{o}_2)\cdots (\vec{i}_n \cup \vec{o}_n)$.

The sets $I$ and $O$ are called \emph{corresponding signals} if $I=\{i_1,\ldots, i_k\}$ and $O=\{o_1,\ldots,o_k\}$. Intuitively, for $1\le j\le k$ we think of $i_j$ as a request generated by a process $j$, and of $o_j$ as a corresponding grant generated by the system. 

A \emph{probabilistic automaton (PA)} is $\cA=\tup{Q,\Sigma,\delta,q_0,F}$ where $Q$ is a finite set of states, $\Sigma$ is a finite alphabet, $\delta:Q\times \Sigma\to \dist(Q)$ is a probabilistic transition function, $q_0\in Q$ is an initial state, and $F\subseteq Q$ is a set of accepting states. Similarly to transducers, an input word $x\in \Sigma^*$ induces a probability measure on the set $\runs(\cA,x)$ of runs of $\cA$ on $x$. Then, we denote by $\cA(x)$ the probability that a run of $\cA$ on $x$ is accepted, i.e. ends in a state in $F$.

\paragraph*{Permutations}
We assume familiarity with basic notions in group theory (see e.g.~\cite{cameron1999permutation}).
A \emph{permutation} of the set $[k]=\{1,\ldots,k\}$ is a bijection $\pi:[k]\to [k]$. A standard representation of permutations is by a \emph{cycle decomposition}, where, for example, the cycle $(1\ 2\ 7)$ represents the permutation $\pi$ where $\pi(1)=2,\pi(2)=7, \pi(7)=1$, and for all other elements we have $\pi(j)=j$.  The set of all permutations on $[k]$, equipped with the functional composition operator $\circ$ forms the \emph{symmetric group} $\cS_k$. Any subgroup of $\cS_k$ is referred to as a \emph{permutation group}. 
A \emph{generating set} of a permutation group $G$ is a finite set $X=\{\pi_1,\ldots,\pi_m\}$ such that every permutation $\tau\in G$ can be expressed as a composition of the elements in $X$. For such a set $X$, we denote the group generated by it by $\tup{X}$.
It is well known that $\{(1\ 2),(1\ 2\ \ldots\ k)\}$ is a generating set of $\cS_k$ (\cite{cameron1999permutation}).

Consider corresponding signals $I=\{i_1,\ldots, i_k\}$ and $O=\{o_1,\ldots,o_k\}$, and let $\pi\in \cS_k$. For a letter $\vec{i}=\{i_{j_1},\ldots,i_{j_m}\}\in \tI$, we define $\pi(\vec{i})=\{i_{\pi{j_1},\ldots,i_{\pi(j_m)}}\}$. That is, $\pi$ permutes the signals given in $\vec{i}$.\footnote{Formally, we would actually need $I$ to be an ordered set. However, the order will be implied by the naming convention, so we let $I$ be a set.} Then, for a word $x=\vec{i}_1\cdot \vec{i}_2\cdots \vec{i}_n\in \Ip$, we define $\pi(x)=\pi(\vec{i}_1)\cdot \pi(\vec{i}_2)\cdots \pi(\vec{i}_n)$. Similar definitions hold for $O$. 
Unless explicitly stated otherwise, we henceforth assume $I$ and $O$ are corresponding signals.

\section{Symmetric Probabilistic Transducers}
\label{sec:symm}
Let $\cT=\tup{I,O,S,s_0,\delta,\lab}$ be an $I/O$ transducer over $I=\{i_1,\ldots, i_k\}$ and $O=\{o_1,\ldots,o_k\}$, and let $\pi\in \cS_k$. We say that $\cT$ is $\pi$-symmetric if for every $x\in \Ip$ and $y\in \Op$ it holds that $\Pr(\cT(x)=y)=\Pr(\cT(\pi(x))=\pi(y))$. That is, $\cT$ is $\pi$-symmetric if whenever we permute the input by $\pi$, the resulting distribution on outputs is permuted by $\pi$ as well.

\begin{example}
	\label{xmp:robin}
	Consider a Round-Robin arbiter over three processes, as depicted in~\cref{fig:fig3}. At each state, the arbiter looks for a request from a single processor $j$, and grants it if it is on, then moves to a state corresponding to process $j+1 \pmod 3$. Observe that this is a deterministic transducer, except that the initial state is unspecified.
	
	\begin{figure}[ht]
		\centering
		\includegraphics[trim=44 54 0 37,clip]{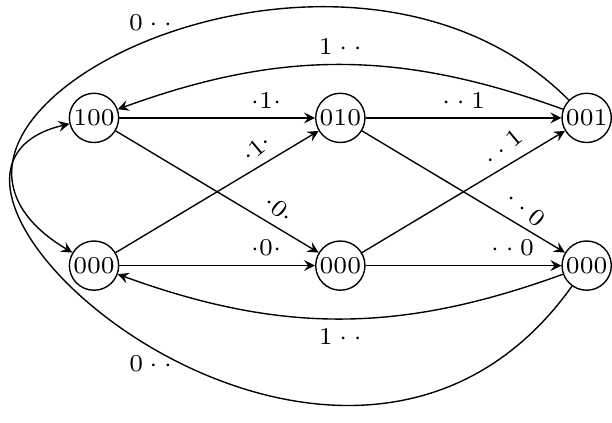}
		\caption{A transducer for a Round Robin arbiter. The labels on the transitions and states are the characteristic vectors of the labels, with $\cdot$ as placeholders. Thus, e.g., $100$ is $\{i_1\}$, and $\cdot \cdot 1$ is any $\vec{i}$ such that $i_3\in \vec{i}$.
			The initial state is unspecified, see \cref{xmp:robin}.}
		\label{fig:fig3}
	\end{figure}
	
	Consider the case where we let the state marked $001$ be initial, which corresponds to letting the first process start. In this case, the transducer is not $\pi$-symmetric for $\pi=(1\ 2\ 3)$. Indeed, the input word $100$ will generate output $100$, but its permutation $\pi(100)=010$ generates output $000\neq \pi(100)$.
	
	However, if we introduce a probabilistic initial state, that chooses each state of $100,010,001$ as the next state, each with probability $\frac13$, the transducer becomes $\pi$-symmetric for any $\pi\in \cS_3$.
	\qed
\end{example}

Consider a permutation group $G=\tup{X}$ generated by $X=\{\pi_1,\ldots,\pi_m\}$. We say that $\cT$ is $G$-symmetric if it is $\pi$-symmetric for every $\pi\in G$.
Toward understanding symmetry, we start by showing that it is enough to consider symmetry under the generators.
\begin{lemma}
	\label{lem:symmetry_composition}
	Consider an $I/O$ transducer $\cT$ over $I=\{i_1,\ldots, i_k\}$ and $O=\{o_1,\ldots,o_k\}$. If $\cT$ is $\pi$-symmetric and $\tau$-symmetric for $\pi,\tau\in \cS_k$, then $\cT$ is $\pi\circ \tau$-symmetric.
	\end{lemma}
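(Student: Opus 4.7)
The plan is to prove the equality $\Pr(\cT(x)=y)=\Pr(\cT((\pi\circ\tau)(x))=(\pi\circ\tau)(y))$ for every $x\in\Ip$ and $y\in\Op$ with $|x|=|y|$ by chaining the two given symmetries. The key observation is that the definition of $\pi$-symmetry is a universally quantified identity over all $x,y$, so in particular we may instantiate it on already-permuted words.

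First, I would apply $\tau$-symmetry directly to the pair $(x,y)$ to obtain
\[
\Pr(\cT(x)=y)=\Pr(\cT(\tau(x))=\tau(y)).
\]
Next, I would apply $\pi$-symmetry to the new pair $(\tau(x),\tau(y))$, which is legitimate because $\tau(x)\in\Ip$ and $\tau(y)\in\Op$ have the same length as $x,y$. This yields
\[
\Pr(\cT(\tau(x))=\tau(y))=\Pr(\cT(\pi(\tau(x)))=\pi(\tau(y))).
\]
Finally, I would invoke the fact that the action of permutations on letters (and hence on words, letter-by-letter) is functorial, i.e., $\pi(\tau(\vec{i}))=(\pi\circ\tau)(\vec{i})$ for every $\vec{i}\in\tI$, and similarly for outputs. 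This follows immediately from unfolding the definition $\pi(\vec{i})=\{i_{\pi(j_1)},\ldots,i_{\pi(j_m)}\}$ and observing that set-renamings compose in the expected way. Combining the three equalities gives the desired identity.

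The proof is essentially a two-line chain, so there is no real obstacle; the only thing requiring a brief justification is the functoriality of the permutation action on words, which I would state as a one-line observation (or fold into the notation without further comment, since it is implicit in the paper's earlier definition of $\pi(x)$). No manipulation of the underlying probability measure on runs is needed, because the definition of $\pi$-symmetry already speaks at the level of the induced output distribution $\Pr(\cT(x)=\cdot)$.
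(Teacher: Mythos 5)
Your proposal is correct and follows essentially the same argument as the paper: apply $\tau$-symmetry to $(x,y)$, then instantiate $\pi$-symmetry at the permuted pair $(\tau(x),\tau(y))$, and chain the two equalities. Your explicit remark that $\pi(\tau(x))=(\pi\circ\tau)(x)$ letter-by-letter is a minor extra care step that the paper leaves implicit.
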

An immediate corollary of \cref{lem:symmetry_composition} is that in order to check whether $\cT$ is $G$-symmetric, it suffices to check whether it is symmetric with respect to the generators of $G$.

\begin{corollary}
	\label{cor:symmetry_group_iff_generator}
	Consider an $I/O$ transducer $\cT$ and a permutation group $G$ with generators $X$, then $\cT$ is $G$-symmetric iff it is $\pi$-symmetric for every $\pi\in X$.
	\end{corollary}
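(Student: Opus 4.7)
The plan is to prove both directions directly from the definition and from \cref{lem:symmetry_composition}. The forward direction is immediate: if $\cT$ is $G$-symmetric, then since $X \subseteq G$, in particular $\cT$ is $\pi$-symmetric for every $\pi \in X$.

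For the backward direction, the idea is induction on the length of a word representing an element of $G$ in the generators $X$. Concretely, since $G$ is a finite group (as a subgroup of $\cS_k$), every $\tau \in G$ can be written as a composition $\tau = \pi_{j_1} \circ \pi_{j_2} \circ \cdots \circ \pi_{j_\ell}$ with each $\pi_{j_t} \in X$; we do not need to invoke inverses explicitly because in any finite group each generator's inverse equals a power of that generator, which is already a composition of elements of $X$. For $\ell = 0$ the composition is the identity permutation, and $\cT$ is trivially symmetric under the identity since $\text{id}(x)=x$ and $\text{id}(y)=y$ for all $x\in\Ip$ and $y\in\Op$. For the inductive step, assume $\cT$ is $\sigma$-symmetric for $\sigma = \pi_{j_2}\circ\cdots\circ\pi_{j_\ell}$. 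Since $\cT$ is $\pi_{j_1}$-symmetric by hypothesis, \cref{lem:symmetry_composition} applied to $\pi_{j_1}$ and $\sigma$ yields that $\cT$ is $(\pi_{j_1}\circ \sigma)$-symmetric, i.e., $\tau$-symmetric. As $\tau$ was arbitrary in $G$, $\cT$ is $G$-symmetric.

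There is essentially no obstacle here beyond being careful about the role of inverses in $G$; the only subtle point is noting that closure under composition (given by \cref{lem:symmetry_composition}) suffices to reach all of $G$ from $X$ because $G$ is finite. In an infinite group one would separately need to verify that $\cT$ is $\pi^{-1}$-symmetric whenever it is $\pi$-symmetric, but this is not required in our setting since every element of $\cS_k$ has finite order.
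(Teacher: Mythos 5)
Your proof is correct and follows essentially the same route as the paper, which treats the corollary as an immediate consequence of \cref{lem:symmetry_composition}: decompose each $\tau\in G$ as a composition of generators and apply the lemma inductively (the paper's definition of a generating set already says every element of $G$ is a composition of elements of $X$, so your careful aside about inverses, while correct, is not even needed). Nothing further to add.
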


\begin{remark}[Symmetry for Explainability]
	\label{rmk:explainability}
\cref{cor:symmetry_group_iff_generator} is key to using symmetry for explainability of model checking. Indeed, it shows that we can convince a designer that a system is e.g., $\S_k$-symmetric by showing that it is symmetric under the two generators. That is, the witness for symmetry consists of demonstrating symmetry on two permutations. As discussed in~\cref{sec:intro}, once the designer is convinced the system possesses symmetric properties, she gains some insight to the possible reasons that make the system correct, or to possible behaviour of bugs, when the system is incorrect.
\qed
\end{remark}

The fundamental problem about symmetry of probabilistic transducers is whether a transducer is $\pi$-symmetric for a given permutation $\pi$. We now show that this problem can be solved in polynomial time.
\begin{theorem}
	\label{thm:deciding_symmetry_permutation_ptime}
	The problem of deciding, given an  $I/O$ transducer $\cT$ and a permutation $\pi\in \cS_k$, whether $\cT$ is $\pi$-symmetric, is solvable in polynomial time.
\end{theorem}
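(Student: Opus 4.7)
The plan is to reduce the problem to equivalence of rational (weighted) probabilistic automata, which is known to be decidable in polynomial time (via Schützenberger/Tzeng).

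First, I would construct from $\cT$ two weighted automata $\cA_1,\cA_2$ over the alphabet $\tI\times\tO$, each with state space $S$, initial state $s_0$, and transition weights in $[0,1]\cap \bbQ$. For $\cA_1$, the weight of a transition from $s$ to $s'$ on letter $(\vec{i},\vec{o})$ is $\delta(s,\vec{i})(s')\cdot \dirac{\lab(s')=\vec{o}}$. A straightforward induction on word length gives that on input $x\otimes y=(\vec{i}_1,\vec{o}_1)\cdots(\vec{i}_n,\vec{o}_n)$, the automaton $\cA_1$ computes exactly $\Pr(\cT(x)=y)$, since this equals the sum over all proper runs of $\cT$ on $x$ whose state labels produce $y$. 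For $\cA_2$, I would use the same construction but precompose with $\pi$: the weight on $(\vec{i},\vec{o})$ from $s$ to $s'$ is $\delta(s,\pi(\vec{i}))(s')\cdot \dirac{\lab(s')=\pi(\vec{o})}$. Then $\cA_2$ computes $\Pr(\cT(\pi(x))=\pi(y))$ on input $x\otimes y$.

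Next, I would observe that $\cT$ is $\pi$-symmetric iff $\cA_1$ and $\cA_2$ compute the same function on every word in $(\tI\times\tO)^+$. The ``only if'' direction is immediate. For ``if'', note that every word $w\in (\tI\times\tO)^+$ can be uniquely written as $x\otimes y$ for $x\in\Ip$ and $y\in\Op$ of matching length (the positions are independent), so agreement of the two functions is exactly the $\pi$-symmetry condition, where we range over all $x,y$ (and since $\pi$ is a bijection, ranging over $x,y$ is equivalent to ranging over $\pi(x),\pi(y)$).

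Finally, I would invoke the polynomial-time algorithm for equivalence of rational weighted automata over $\bbQ$ (Tzeng's algorithm, or equivalently the Schützenberger minimization approach): given two such automata with $n_1,n_2$ states, equivalence can be decided in time polynomial in $n_1+n_2$ and the bit-size of the rational weights. Since $\cA_1$ and $\cA_2$ each have $|S|$ states and are clearly constructible in polynomial time from $\cT$ and $\pi$, the overall procedure runs in polynomial time. The main subtlety to verify carefully is the correctness of the two weighted-automata constructions, i.e.\ that summing the path weights of $\cA_i$ on $x\otimes y$ really equals the probabilities on both sides of the symmetry condition, as the rest (PTIME equivalence of weighted automata) is a standard black box.
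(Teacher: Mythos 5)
Your proposal is correct and follows essentially the same route as the paper: reduce $\pi$-symmetry to equivalence of two automata over a combined input/output alphabet, one mimicking $\cT$ and one mimicking $\cT$ under $\pi$, and invoke the polynomial-time equivalence test of Sch\"utzenberger/Tzeng. The only cosmetic difference is that you keep substochastic weighted automata, whereas the paper completes the distributions with a rejecting sink state $q_\bot$ to obtain genuine probabilistic automata; both variants are handled by the same equivalence algorithm.
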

\begin{proof}
	Given two probabilistic automata $\cA$ and $\cB$ over the alphabet $\Sigma$, the problem of determining whether $\cA(x)=\cB(x)$ for every $x\in \Sigma^*$, dubbed the \emph{equivalence problem}, is solvable in polynomial time~\cite{gimbert2010probabilistic,schutzenberger1961definition,tzeng1992polynomial}. Our proof is by reduction of the problem at hand to the equivalence problem for probabilistic automata.
	
	Consider an $I/O$ transducer $\cT=\tup{I,O,S,s_0,\delta,\lab}$ over $I=\{i_1,\ldots, i_k\}$ and $O=\{o_1,\ldots,o_k\}$, and let $\pi\in \cS_k$. We construct from $\cT$ two PAs $\cA$ and $\cB$.  Intuitively, $\cA$ mimics the behaviour of $\cT$, by reading words over $\tIO$, and accepting a word $w\in \IOp$ with probability $\mu$ iff $\cT$, when reading the inputs that appear in $w$, generates the outputs that appear in $w$ with probability $\mu$. The PA $\cB$ works exactly like $\cA$, but permutes both the inputs and outputs by $\pi$.

	Formally, $\cA=\tup{S\cup\{q_{\bot}\},\tIO,\eta,s_0,S}$ and $\cB=\tup{S\cup\{q_{\bot}\},\tIO,\zeta,s_0,S}$ where $q_{\bot}$ is a new state, and the transition functions are defined as follows. Let $q\in S$ and $\sigma=\vec{i}\cup\vec{o}$ with $\vec{i}\in \tI$ and $\vec{o}\in \tO$, and let $V_p=\sum_{
		p\in S,\ 
		\lab(p)=\vec{o}} \delta(q,\vec{i})(p)$ be the probability assigned by $\cT$ to seeing a state labelled $\vec{o}$ after reading $\vec{i}$ in state $q$, then $\eta(q,\sigma)\in \dist(S\cup \{q_\bot\})$ is the following distribution: 
	\[
		\eta(q,\sigma)(p)=\begin{cases}
			\delta(q,\vec{i})(p) & \mbox{ if } p\in S \mbox{ and } \lab(p)=\vec{o}\\
			0 	& \mbox{ if } p\in S \mbox{ and } \lab(p)\neq\vec{o}\\
			1-V_p & \mbox{ if } p=q_{\bot} 
		\end{cases}
	\]
	In addition, $\eta(q_{\bot},\sigma)(q_{\bot})=1$ (so $q_{\bot}$ is a rejecting sink). We demonstrate the construction of $\cA$ in \cref{fig:tran2pa0,fig:tran2pa1}.
	
	\begin{figure}[ht]
		\centering
		\subcaptionbox{Transition in $\cT$\label{fig:tran2pa0}}
		{\includegraphics[scale=.8]{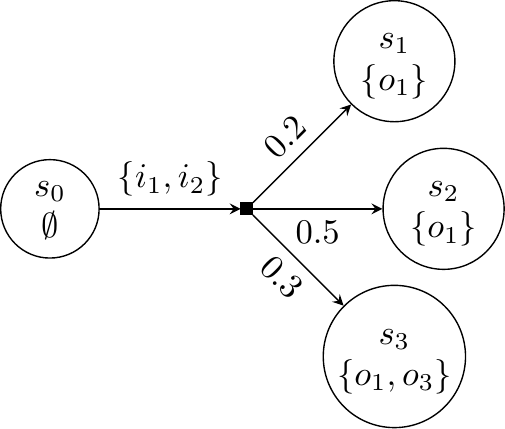}}\quad
		\subcaptionbox{Transition in $\cA$\label{fig:tran2pa1}}
		{\includegraphics[scale=1]{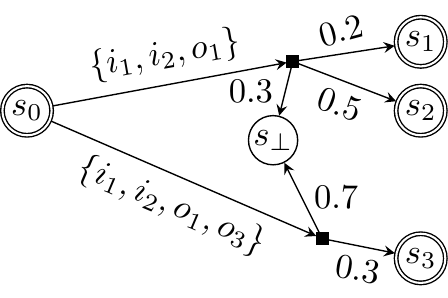}}\quad
		\subcaptionbox{Transition in $\cB$\label{fig:tran2pa2}}
		{\includegraphics[scale=1]{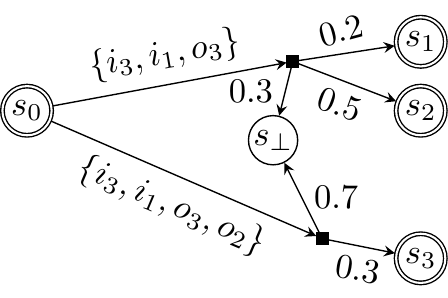}}
		\caption{A transition in a transducer $\cT$ over $I=\{i_1,i_2,i_3\}$ and $O=\{o_1,o_2,o_3\}$, and the corresponding transitions in $\cA$ and $\cB$, under the permutation $\pi=(1\ 2\ 3)$.
		Observe that the transition in $\cB$ corresponds to the inverse permutation, $\pi^{-1}=(3\ 2\ 1)$, so that e.g., $\pi(\{i_3,i_1\})=\{i_1,i_2\}$.
		}\label{fig:tran2pa}
	\end{figure}

	The construction of $\cB$ is similar, but accounts for the permutation $\pi$. Let $q\in S$ and $\sigma=\vec{i}\cup\vec{o}$ with $\vec{i}\in \tI$ and $\vec{o}\in \tO$, and let $U_p=\sum_{
		p\in S,\ 
		\lab(p)=\pi(\vec{o})} \delta(q,\pi(\vec{i}))(p)$ be the probability assigned by $\cT$ to seeing a state labelled $\pi(\vec{o})$ after reading $\pi(\vec{i})$ in state $q$, then $\zeta(q,\sigma)\in \dist(S\cup \{q_\bot\})$ is the following distribution: 
	\[
	\zeta(q,\sigma)(p)=\begin{cases}
	\delta(q,\pi(\vec{i}))(p) & \mbox{ if } p\in S \mbox{ and } \lab(p)=\pi(\vec{o})\\
	0 	& \mbox{ if } p\in S \mbox{ and } \lab(p)\neq\pi(\vec{o})\\
	1-U_p & \mbox{ if } p=q_{\bot} 
	\end{cases}
	\]
	In addition, $\zeta(q_{\bot},\sigma)(q_{\bot})=1$ (so $q_{\bot}$ is a rejecting sink). We demonstrate the construction of $\cB$ in \cref{fig:tran2pa0,fig:tran2pa2}.
	
	Consider words $x\in \Ip$ and $y\in \Op$. Since $q_{\bot}$ is the only rejecting state in both $\cA$ and $\cB$, then by construction it is easy to see that $\cA(x\otimes y)=\Pr(\cT(x)=y)$ and $\cB(x\otimes y)=\Pr(\cT(\pi(x))=\pi(y))$. Thus, we have that $\cA$ and $\cB$ are equivalent iff $\cT$ is $\pi$-symmetric, and since equivalence can be decided in polynomial time, we are done.
\end{proof}

Combining \cref{thm:deciding_symmetry_permutation_ptime} with \cref{cor:symmetry_group_iff_generator}, we have the following.
\begin{corollary}
	\label{cor:deciding_symmetry_group_ptime}
	The problem of deciding, given an  $I/O$ transducer $\cT$ and a finite set of generators $X=\{\pi_1,\ldots,\pi_m\}$, whether $\cT$ is $\tup{X}$-symmetric, is solvable in polynomial time.
\end{corollary}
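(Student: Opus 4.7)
The plan is essentially to compose the two results already in hand. By \cref{cor:symmetry_group_iff_generator}, the transducer $\cT$ is $\tup{X}$-symmetric if and only if it is $\pi$-symmetric for every $\pi\in X$, so membership in the group $\tup{X}$ need never be tested explicitly; it suffices to verify symmetry against each of the $m$ generators individually. This reduces the task to $m$ instances of the single-permutation problem.

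Second, I would invoke \cref{thm:deciding_symmetry_permutation_ptime} once per generator $\pi_i$, each call running in time polynomial in $|\cT|$ and $k$. Running these $m$ checks sequentially and accepting iff all of them accept yields a decision procedure whose overall runtime is bounded by $m$ times the cost of the single-permutation algorithm, which is clearly polynomial in the total input size $|\cT|+\sum_{i=1}^m |\pi_i|$.

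There is no real obstacle here: correctness is immediate from the iff in \cref{cor:symmetry_group_iff_generator}, and the polynomial bound follows because a polynomial times $m$ is still polynomial in the input. The only thing to be slightly careful about is to note that the generators $\pi_i$ are part of the input, so their encoding size is already counted; hence there is no hidden blowup from representing them in cycle form. This completes the sketch.
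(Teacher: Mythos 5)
Your proposal is correct and matches the paper's argument exactly: the paper derives this corollary by combining \cref{cor:symmetry_group_iff_generator} with \cref{thm:deciding_symmetry_permutation_ptime}, running the single-permutation check once per generator. Nothing is missing.
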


In particular, since the symmetric group $\cS_k$ is generated by two permutations $\{(1\ 2),(1\ 2\ \ldots\ k)\}$, we have the following.
\begin{corollary}
	\label{cor:deciding_S_k_ptime}
	The problem of deciding, given an $I/O$ transducer $\cT$, whether $\cT$ is $\cS_k$-symmetric, is solvable in polynomial time.
\end{corollary}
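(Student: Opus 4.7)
The plan is to invoke Corollary~\ref{cor:deciding_symmetry_group_ptime} with the specific two-element generating set of $\cS_k$ recalled in the Preliminaries. Recall that $\cS_k = \tup{X}$ where $X = \{(1\ 2),\ (1\ 2\ \ldots\ k)\}$ (this is a classical fact cited from~\cite{cameron1999permutation}). In particular, $|X|=2$ regardless of $k$, so the generating set is encoded in size polynomial in $k$.

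Given the input transducer $\cT$, the algorithm is simply to construct the two permutations in $X$ (each representable in $O(k)$ space as a cycle decomposition) and invoke the procedure of Corollary~\ref{cor:deciding_symmetry_group_ptime}. That corollary, in turn, calls the polynomial-time procedure of Theorem~\ref{thm:deciding_symmetry_permutation_ptime} on $\cT$ with each generator $\pi\in X$, accepting iff all checks pass. By Corollary~\ref{cor:symmetry_group_iff_generator}, $\cT$ is $\cS_k$-symmetric iff it is $\pi$-symmetric for each $\pi\in X$, so correctness is immediate.

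For the complexity, each of the two invocations of Theorem~\ref{thm:deciding_symmetry_permutation_ptime} runs in time polynomial in $|\cT|+k$, since the reduction to equivalence of probabilistic automata builds PAs of size $O(|\cT|)$ over the alphabet $2^{I\cup O}$, and PA equivalence is decidable in polynomial time~\cite{tzeng1992polynomial}. Summing two polynomial-time checks yields the overall polynomial bound.

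There is essentially no obstacle here beyond plugging together the already-proved results: the only thing to notice is that the generating set has constant cardinality, so the fact that $\cS_k$ contains $k!$ elements does not enter the complexity analysis --- all reasoning about the whole group is absorbed into Corollary~\ref{cor:symmetry_group_iff_generator}, which reduces group-symmetry to checking only the generators.
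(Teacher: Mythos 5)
Your proposal is correct and follows exactly the paper's route: the paper likewise obtains this corollary by instantiating \cref{cor:deciding_symmetry_group_ptime} (which rests on \cref{cor:symmetry_group_iff_generator} and \cref{thm:deciding_symmetry_permutation_ptime}) with the two-element generating set $\{(1\ 2),(1\ 2\ \ldots\ k)\}$ of $\cS_k$. Your added remark that the constant-size generating set is what keeps the $k!$ group size out of the complexity is exactly the intended observation.
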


\section{Approximate Symmetry}
\label{sec:approx_sym}
While aspiring to obtain symmetric systems is noble, in practice exact symmetry may be too strong a requirement, for example if the source of randomness supplies binary bits, and one needs e.g., $\frac13$ probability, then only an approximate probability can be used.
Thus, it is reasonable to seek approximate notions of symmetry.

\subsection{$L_\infty$ Symmetry}
\label{sec:L_inf_approx}
The most straightforward approach toward approximate symmetry in probabilistic transducers is induced by the the $L_\infty$ norm, as follows.
Let $\cT$ be an $I/O$-transducer, let $\pi\in \cS_k$, and let $\epsilon>0$. We say that $\cT$ is $(\epsilon,\pi)$-symmetric if $|\Pr(\cT(x)=y)-\Pr(\cT(\pi(x))=\pi(y))|\le \epsilon$ for every $x\in \Ip$ and for every $y\in \Op$. 
That is, permuting the inputs by $\pi$ perturbs the output distribution by at most $\epsilon$.

Unfortunately, as we now show, approximate symmetry is undecidable.
\begin{theorem}
	\label{thm:deciding_approx_sym_undecidable}
	The problem of deciding, given an  $I/O$ transducer $\cT$ a permutation $\pi\in \cS_k$ and $\epsilon>0$, whether $\cT$ is $(\epsilon,\pi)$-symmetric, is undecidable.
\end{theorem}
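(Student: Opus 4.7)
My plan is to reduce from a classical undecidable problem on probabilistic automata: given a PA $\cA$ over a binary alphabet and a rational $\epsilon\in(0,1)$, deciding whether $\cA(x)\le\epsilon$ for every $x$ is undecidable, since this is the complement of the strict threshold (non-)emptiness problem for PAs~\cite{gimbert2010probabilistic}. I will encode such an $\cA$ into an $I/O$-transducer $\cT$ and a permutation $\pi\in\cS_k$ so that $\cT$ is $(\epsilon,\pi)$-symmetric iff $\cA(x)\le\epsilon$ for every $x$.

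I take $k=3$ and $\pi=(1\ 2)\in\cS_3$, so that $\pi$ swaps $\{i_1\}\leftrightarrow\{i_2\}$ and $\{o_1\}\leftrightarrow\{o_2\}$ while fixing $\emptyset$, $\{i_1,i_2\}$, $\{i_3\}$, and $\{o_3\}$. The transducer has the following structure. From an initial state $s_0$, the letter $\{i_1\}$ enters a simulation of $\cA$ in which every state is labelled $\emptyset$; any other first letter, in particular $\{i_2\}$, routes $\cT$ to a neutral $\emptyset$-labelled sink that self-loops on every letter. Inside the $\cA$-simulation, the two $\pi$-fixed letters $\emptyset$ and $\{i_1,i_2\}$ encode the binary alphabet of $\cA$ and drive its transitions, while the end-marker $\{i_3\}$ moves $\cT$ deterministically from a PA-state $q$ to an \emph{accepting-final} state (labelled $\{o_1\}$) if $q\in F$, and to a \emph{rejecting-final} state (labelled $\emptyset$) otherwise; any non-encoding letter during the simulation also goes to the sink, and the two final states transition to the sink on any further letter.

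A direct computation yields, for every $w\in\{\emptyset,\{i_1,i_2\}\}^*$ encoding a PA input $\bar w$ and every suffix $z$,
\[
\Pr\bigl(\cT(\{i_1\}\cdot w\cdot \{i_3\}\cdot z)=\emptyset^{|w|+1}\cdot\{o_1\}\cdot\emptyset^{|z|}\bigr)=\cA(\bar w),
\]
whereas the $\pi$-image of this pair is $\bigl(\{i_2\}\cdot w\cdot\{i_3\}\cdot\pi(z),\ \emptyset^{|w|+1}\cdot\{o_2\}\cdot\emptyset^{|z|}\bigr)$, and the permuted input is routed straight to the sink, producing the permuted output with probability $0$. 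The $(\epsilon,\pi)$-symmetry condition on these pairs therefore collapses to $\cA(\bar w)\le\epsilon$ for all $\bar w$. For every other $(x,y)$ pair the condition is automatic: on the $\{i_1\}$-side only the two output shapes $\emptyset^{|w|+1}\cdot\{o_1\}\cdot\emptyset^{|z|}$ and $\emptyset^{|x|}$ carry positive probability (with combined mass $1$), and on every other input both $\cT(x)$ and $\cT(\pi(x))$ produce $\emptyset^{|x|}$ deterministically via the sink.

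The main obstacle I anticipate is the careful but routine enumeration of $(x,y)$ pairs needed to confirm that the non-encoded inputs introduce no spurious asymmetry; the simplicity of the sink and of the only non-$\emptyset$ label $\{o_1\}$ is precisely what makes this check go through. A minor modelling point, since the paper restricts to nonempty input words, is that $\{i_1\}\cdot w\cdot\{i_3\}$ already has length at least $2$ and $\bar w$ ranges over all PA inputs (including the empty word), so the reduction is unaffected.
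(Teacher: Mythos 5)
Your proposal is correct and follows essentially the same route as the paper: a reduction from the undecidable (strict) threshold/emptiness problem for probabilistic automata, with $\pi=(1\ 2)$ switching between a branch that simulates the PA over $\pi$-invariant letters and a trivial branch, and with $\epsilon$ playing the role of the threshold $\lambda$. The differences are only in the gadget: you take $k=3$ with an end-marker $\{i_3\}$ and exploit the non-invariant label $\{o_1\}$ (whose $\pi$-image $\{o_2\}$ never occurs, so the permuted probability is $0$), whereas the paper stays with $k=2$, uses only $\pi$-invariant labels, and adds a component that deterministically mimics the rejecting output -- a choice that makes \cref{cor:approximate_undecidable_fixed} (undecidability already for all of $\cS_2$) immediate, which your $k=3$ variant would need a small additional check to recover, though that is not required for the theorem itself.
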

\begin{proof}
	The \emph{emptiness problem} for PA is to decide, given a PA $\cA$ over $\Sigma$ and a threshold $\lambda\in [0,1]$, whether there exists a word $w\in \Sigma^*$ such that $\cA(w)> \lambda$. This problem is known to be undecidable~\cite{paz2014introduction,madani2003undecidability,gimbert2010probabilistic}. 
	
	We show that approximate symmetry is undecidable via a reduction from (the complement of) a restriction of the emptiness problem, where the given PA is over the alphabet $\{0,1\}$. 
	The problem remains undecidable under this restriction, as we can encode any larger alphabet $\Gamma$ using fixed-length sequences in $\{0,1\}^{d}$, such that while reading the $d$ symbols that compose a single letter in $\Gamma$, the states are not accepting (and hence we do not introduce a word whose acceptance probability is above $\lambda$). 

	We start with an intuitive description of the reduction, depicted in~\cref{fig:reductionPA}.
	\begin{figure}[ht]
		\centering
		\includegraphics[scale=.8]{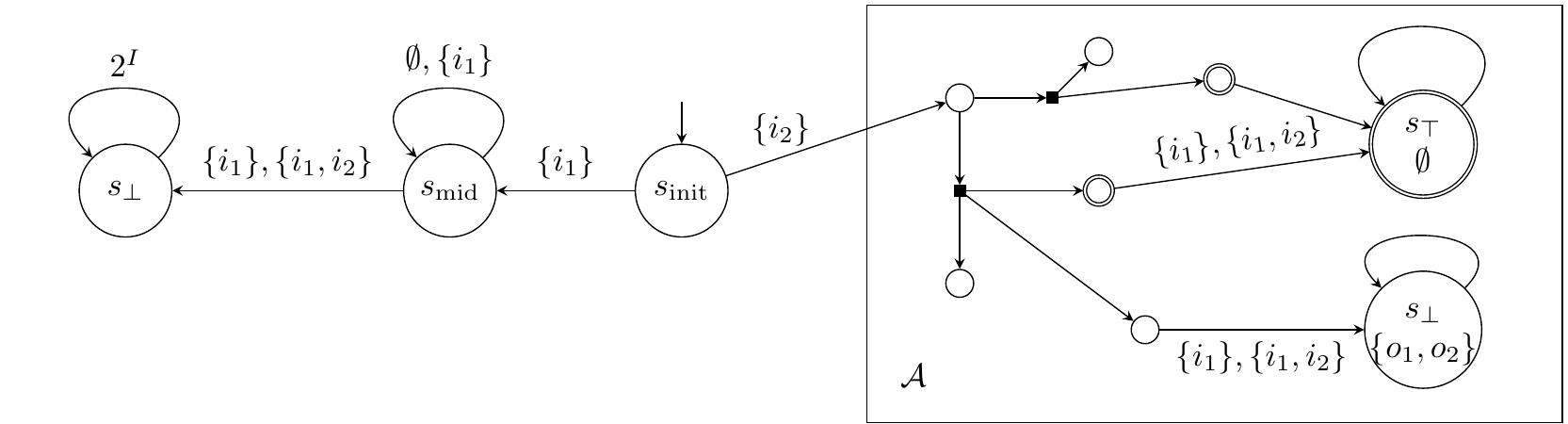}
		\caption{The transducer constructed from a PA. The black squares denote probabilistic branching.}
		\label{fig:reductionPA}
	\end{figure}

	 Consider a PA $\cA$ over the alphabet $\Sigma=\{0,1\}$. We construct a transducer $\cT$ over $I=\{i_1,i_2\}$ and $O=\{o_1,o_2\}$ which has two components. Initially, if $\cT$ sees the input $\{i_2\}$, it moves to a component which mimics $\cA$ using the alphabet $\{\emptyset,\{i_2\}\}$ instead of $\{0,1\}$. At this stage, all the states are marked with the output $\{o_1,o_2\}$. If at any point the input signal $i_1$ is given, i.e. the letter $\{i_1\}$ or $\{i_1,i_2\}$, then $\cT$ proceeds to a state labelled $\{o_1,o_2\}$ from non-accepting states of $\cA$, and to a state labelled $\emptyset$ from accepting states. Thus, a word of the form $\{i_2\}\cdot x\cdot \{\{i_1\}, \{i_1,i_2\}\}^*$ with $x\in \{\emptyset,\{i_2\}\}^n$ would yield an output of the form $\emptyset^{n+1}\cdot \emptyset^*$ with probability $\cA(x)$ and of the form $\emptyset^{n+1}\cdot \{o_1,o_2\}^*$ with probability $1-\cA(x)$. Observe that both output possibilities are invariant under the permutation $(1\ 2)$.
	
	If, initially, $\cT$ sees the input $\{i_1\}$, it moves to a state labelled $\emptyset$, which loops as long as $\{i_1\}$ or $\emptyset$ are seen. Then, if $\{i_2\}$ or $\{i_1,i_2\}$ is seen, it moves to a sink labelled $\{o_1,o_2\}$. Essentially, this component mimics the output sequence of a rejecting run of $\cA$ in the first component, under the permutation $(1\ 2)$. Hence, taking $\epsilon=\lambda$, we have that $\cT$ is $(\epsilon, (1\ 2))$- symmetric iff there does not exist a word $x$ such that $\cA(x)> \lambda$.

	We proceed to give the precise reduction.
	Consider a PA $\cA=\tup{Q,\Sigma,\delta,q_0,F}$ with $\Sigma=\{0,1\}$, we construct an $I/O$ transducer $\cT=\tup{I,O,S,\sinit,\eta,\lab}$ as follows. The states of $\cT$ are $S=Q\cup \{\smid,\sinit,s_\top, s_{\bot}\}$, where $s_\bot\notin Q$, and the input and output sets are $I=\{i_1,i_2\}$ and $O=\{o_1,o_2\}$. The labelling function is given by $\lab(q)=\emptyset$ for all $q\in Q$,  $\lab(s_{\bot})=O=\{o_1,o_2\}$, and $\lab(\sinit)=\lab(\smid)=\{\emptyset\}$. The transition function, as depicted in \cref{fig:reductionPA}, is defined as follows.
	
	First, for every $q\in Q$ and $\vec{i}\in \{\emptyset,\{i_2\}\}$, we have $\eta(q,\vec{i})=\delta(q,\vec{i})$, where we identify $\{\emptyset,\{i_2\}\}$ with $\{0,1\}$ in an arbitrary bijective manner. Next, if $q\in F$, then $\eta(q,\{i_1\})=\eta(q,\{i_1,i_2\})=\dirac{s_\top}$, and if $q\notin F$ then $\eta(q,\{i_1\})=\eta(q,\{i_1,i_2\})=\dirac{s_\bot}$. 
	The remaining transitions are\\
	\begin{tabular}{ l l }
	$\eta(\sinit,\{i_1\})=\dirac{\smid}$,	& $\eta(\smid,\emptyset)=\eta(\smid,\{i_1\})=\dirac{\smid}$,  \\ 
	 $\eta(\sinit,\{i_2\})=\dirac{q_0}$,	& $\eta(\smid,\{i_2\})=\eta(\smid,\{i_1,i_2\})=\dirac{s_\bot}$,\\ 
	$\eta(\sinit,\emptyset)=\eta(\sinit,\{i_1,i_2\})=\dirac{s_\bot}$,	& \\ 
	\end{tabular} \\
	and for every $\vec{i}\in\tI$ we have $\eta(s_\bot,\vec{i})=\dirac{s_\bot}$ and $\eta(s_\top,\vec{i})=\dirac{s_\top}$.  
	
	Let $\pi=(1\ 2)$ and $\epsilon=\lambda$. Keeping our identification of $\{\emptyset,\{i_2\}\}$ with $\{0,1\}$, we claim that there exists a word $x'\in \{\emptyset,\{i_2\}\}^*$ such that $\cA(x')>\lambda$ iff there exists words $x\in \Ip$ and $y\in \Op$ such that $|\Pr(\cT(x)=y)-\Pr(\cT(\pi(x))=\pi(y))|>\epsilon$ (i.e. $\cT$ is not $(\epsilon,\pi)$-symmetric). 
	Observe that $\lab$ assigns only the labels $\emptyset$ and $\{o_1,o_2\}$, both of which are invariant under $\pi$. Thus, the latter condition becomes 
	\begin{equation}
	\label{eq:undec_eq}
	|\Pr(\cT(x)=y)-\Pr(\cT(\pi(x))=y)|>\epsilon.
	\end{equation}
	
	The correctness proof can be found in~\cref{apx:correctness}.

\end{proof}

A-priori, the fact that $(\epsilon,\pi)$-symmetry is undecidable does not mean that approximate symmetry for an entire permutation group is undecidable, not that for fixed $\epsilon$ the problem is undecidable. Unfortunately, however, the proof of \cref{thm:deciding_approx_sym_undecidable} uses the permutation group $\cS_2$, whose only nontrivial permutation is $(1\ 2)$. Moreover, the reduction uses the given threshold $\lambda$ as is, by setting $\lambda=\epsilon$, and the emptiness problem is known to be undecidable even when $\lambda$ is a fixed number in $(0,1)$. Thus, we have the following.
\begin{corollary}
	\label{cor:approximate_undecidable_fixed}
	For every $\epsilon\in (0,1)$, the problem of deciding, given an $I/O$ transducer $\cT$ whether $\cT$ is $(\epsilon,\pi)$-symmetric for every $\pi\in \cS_k$, is undecidable.
	\end{corollary}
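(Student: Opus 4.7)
The plan is to observe that the reduction already carried out in the proof of \cref{thm:deciding_approx_sym_undecidable} essentially establishes the corollary, once two small observations are in place; no new construction is needed.

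First, I would note that the transducer $\cT$ built in that reduction has $|I|=|O|=2$, so $\cS_k=\cS_2=\{\mathrm{id},(1\ 2)\}$. Every transducer is trivially $(\epsilon,\mathrm{id})$-symmetric for any $\epsilon\ge 0$, since $\mathrm{id}(x)=x$ and $\mathrm{id}(y)=y$. Consequently, for the instances produced by the reduction, being $(\epsilon,\pi)$-symmetric for every $\pi\in\cS_2$ is equivalent to being $(\epsilon,(1\ 2))$-symmetric. Thus the entire $\cS_k$-quantification collapses to the single transposition addressed by \cref{thm:deciding_approx_sym_undecidable}.

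Second, I would argue that $\epsilon$ can be fixed. The reduction in \cref{thm:deciding_approx_sym_undecidable} takes a PA $\cA$ with threshold $\lambda$ and produces $\cT$ together with the choice $\epsilon=\lambda$. Hence to conclude undecidability for a particular fixed $\epsilon_0\in(0,1)$, it suffices to invoke undecidability of PA emptiness for the \emph{fixed} threshold $\lambda=\epsilon_0$. This is well-known \cite{paz2014introduction,madani2003undecidability,gimbert2010probabilistic}: for any fixed $\lambda_0\in(0,1)$, one can reduce emptiness at an arbitrary threshold $\lambda$ to emptiness at $\lambda_0$ by prepending a probabilistic ``coin flip'' on a fresh initial letter that routes either to the original automaton or to a sink of suitably chosen acceptance probability, rescaling the acceptance value by an affine map so that the original language is nonempty above threshold $\lambda$ iff the new one is nonempty above $\lambda_0$.

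Putting these two observations together, for any fixed $\epsilon\in(0,1)$ the map $\cA\mapsto\cT$ of \cref{thm:deciding_approx_sym_undecidable}, instantiated with $\lambda=\epsilon$, is a computable reduction from (the complement of) PA emptiness at threshold $\epsilon$ to the problem of deciding whether $\cT$ is $(\epsilon,\pi)$-symmetric for every $\pi\in\cS_2$. The only minor obstacle is the rescaling step in the second paragraph, to make sure the threshold can indeed be pinned to an arbitrary value in $(0,1)$; this is routine but should be mentioned explicitly to avoid relying on an implicit parameter in the cited undecidability theorems.
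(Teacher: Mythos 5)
Your proposal is correct and follows essentially the same route as the paper, whose argument (in the paragraph preceding the corollary) is exactly that the reduction of \cref{thm:deciding_approx_sym_undecidable} lives over $\cS_2$, where symmetry under the identity is trivial, and that it sets $\epsilon=\lambda$ while PA emptiness is already known to be undecidable for a fixed threshold $\lambda\in(0,1)$. Your extra affine-rescaling sketch merely re-derives that cited fixed-threshold fact, which the paper simply invokes from the literature.
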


\begin{remark}[Composability]
	\label{rmk:composability_approx}
	While undecidability of $(\epsilon,\pi)$-symmetry is unfortunate, the reader may take solace in the fact that $(\epsilon,\pi)$-symmetry is anyway not preserved under composition. Indeed, if $\cT$ is $(\epsilon,\pi)$-symmetric and $(\delta,\tau)$-symmetric, it only guarantees that it is $(\delta+\epsilon,\tau\cdot \pi)$-symmetric. Thus, in order to ensure symmetry over a group, a sound method would have to take into account the \emph{diameter} of the group. This, however, may lose completeness. Thus, $(\epsilon,\pi)$-symmetry is not a robust notion.
	\end{remark}

\subsection{Parikh Symmetry}
\label{sec:parikh}
The notions of symmetry studied so far have a ``letter-by-letter'' flavour, where we compare the distribution of specific outputs for a given inputs. We now turn to study a different notion of symmetry, that abstracts away the order of the output symbols, and draws instead on the Parikh image of the computation.

Let $I=\{i_1,\ldots, i_k\}$ and $O=\{o_1,\ldots,o_k\}$. For a word $y=\vec{o}_1\cdots \vec{o}_n\in \tO$, and $1\le j\le k$, define $\num(y,j)=|\{m\ST o_j\in \vec{o}_m\}|$ to be the number of occurrences of $o_j$ in $y$. Then, we define the \emph{Parikh image}\footnote{Observe that this is not the standard Parikh image, in that it is the image with respect to signals in $O$, rather than to letters in $2^O$.} of $y$ to be $\park(y)=(\num(y,1),\ldots,\num(y,k))\in \bbN^k$.

Given a permutation $\pi$ and a vector $\vec{a}=(a_1,\ldots,a_k)\in \bbN^k$, we define $\pi(\vec{a})=(a_{\pi^{-1}(1)},\ldots,a_{\pi^{-1}(k)})$. Note that we use $\pi^{-1}$ so that the following relation holds: if e.g., $\pi(1)=3$, then index $3$ in $\pi(\vec{a})$ contains $a_1$.  

Consider an $I/O$ transducer $\cT$ and a word $x\in \Ip$. The outputs of $\cT$ on $x$ induce a probability measure on (a finite subset of) $\bbN^k$, where for a vector $\vec{a}\in \bbN^k$ we have $\Pr(\cT(x)=\vec{a})=\sum_{y: \park(y)=\vec{a}}\Pr(\cT(x)=y)$. 
We can thus also consider the \emph{expected} value of the Parikh image, given by $\bbE[\park(\cT(x))]=\sum_{y} \Pr(\cT(x)=y) \park(y)$ (where the product is element-wise, so this is a vector in $\bbN^k$).

Parikh images give rise to two measures of symmetry: given a permutation $\pi$, we say that $\cT$ is \emph{$\pi$-Parikh distribution symmetric} if for every $x\in \Ip$ and every $\vec{a}\in \bbN^k$ we have $\Pr(\park(\cT(x))=\vec{a})=\Pr(\park(\cT(\pi(x)))=\pi(a))$.  That is, every word $x$ induces the same distribution of Parikh images as $\pi(x)$ does for the permuted images. 
A weaker notion of symmetry uses expectation: we say that $\cT$ is \emph{$\pi$-Parikh expected symmetric} if for every $x\in \Ip$ we have $\bbE[\park(\cT(x))]=\pi(\bbE[\park(\cT(\pi(x)))])$

Note that Parikh-symmetry assumes the number of occurrences of a certain output signal is meaningful. This is relevant when the output signals measure e.g., number of grants for requests, but makes less sense when the outputs represent e.g., a choice between channels through which a message is routed.

Our algorithmic results about Parikh symmetry use a translation to \emph{probabilistic reward automata} (PRA)~\cite[Section 5]{kiefer2014stability}. A PRA is a PA $\cA=\tup{Q,\Sigma,\delta,q_0,F}$ equipped with a \emph{reward function} $\rew:Q\to \{0,1\}^k$ for some $k\in \bbN$.\footnote{The rewards in~\cite{kiefer2014stability} also allow $-1$ rewards, and is set on the transitions of the PRA. Since it is trivial to push rewards from the states to the transitions, our model is simpler.} The rewards are summed along a run, and the value of a word $w\in \Sigma^*$, denoted $\rew(w)$, is the expected reward, that is, the weighted sum of the rewards along all runs, weighted by their respective probabilities. We denote by $\cA(w)$ the distribution of reward vectors in $\bbN^k$, induced by the runs of $\cA$ on $w$.

In order to reason about Parikh images, we propose the following translation.
\begin{lemma}
	\label{lem:tran_to_PRA}
	Given an $I/O$ trandsucer $\cT$, we can construct two PRAs $\cA,\cB$ over the alphabet $\tI$ and with reward function of dimension $k=|I|$, such that for every $x\in \Ip$ and for every $\vec{a}\in \bbN^k$, we have that $\Pr(\cA(w)=\vec{a})=\Pr(\park(\cT(x))=\vec{a})$, and $\Pr(\cB(w)=\vec{a})=\Pr(\park(\cT(\pi(x)))=\pi(\vec{a}))$.
	\end{lemma}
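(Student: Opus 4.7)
The plan is to mirror $\cT$'s probabilistic structure in two PRAs whose rewards are tuned so that the total reward accumulated along a run encodes precisely the Parikh image of the output word generated by the corresponding $\cT$-run, up to a coordinate permutation in the case of $\cB$.

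For $\cA$, I take the same state space $S$, the same initial state $s_0$, the input alphabet $\tI$, and the transition function $\delta$ of $\cT$ (dropping the output labelling), with $F=S$ as a nominal accepting set. The reward function $\rew: S\to \{0,1\}^k$ is defined by $\rew(s)_j=1$ iff $o_j\in\lab(s)$. Every proper run $\rho=q_0,q_1,\ldots,q_n$ of $\cT$ on $x$ is then a proper run of $\cA$ on $x$ with the same probability, and the accumulated reward $\sum_{m=1}^n \rew(q_m)$ is coordinate-wise the number of positions $m$ with $o_j\in\lab(q_m)$, i.e., exactly $\park(\lab(\rho))$. Summing over all runs yields
\[
\Pr(\cA(x)=\vec{a})=\sum_{y:\ \park(y)=\vec{a}}\Pr(\cT(x)=y)=\Pr(\park(\cT(x))=\vec{a}).
\]

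For $\cB$, I keep the same state space and initial state, but replace the transition function by $\delta'(s,\vec{i})=\delta(s,\pi(\vec{i}))$, so that reading $x$ in $\cB$ simulates reading $\pi(x)$ in $\cT$. The reward at $s$ is defined by $\rew(s)_j=1$ iff $o_{\pi(j)}\in\lab(s)$, which is exactly $\pi^{-1}$ applied to the vector used in $\cA$. Runs of $\cB$ on $x$ are then in probability-preserving bijection with runs of $\cT$ on $\pi(x)$ (same state sequence, same one-step probabilities), and by linearity of the coordinate permutation the accumulated reward of such a $\cB$-run equals $\pi^{-1}(\park(y))$, where $y$ is the output word of the matched $\cT$-run. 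Hence the reward equals $\vec{a}$ iff $\park(y)=\pi(\vec{a})$, giving $\Pr(\cB(x)=\vec{a})=\Pr(\park(\cT(\pi(x)))=\pi(\vec{a}))$ as required.

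The main point of care is keeping the permutation convention straight: since $\pi$ acts on $\bbN^k$ by $(\pi(\vec{a}))_j=a_{\pi^{-1}(j)}$, one must pre-compose the reward in $\cB$ with $\pi^{-1}$ (not $\pi$), so that the twist in the input (reading $\pi(x)$ rather than $x$) is exactly matched by the permutation of reward axes. Everything else is bookkeeping: the convention that $\park$ ignores the label of $q_0$ matches the PRA convention of accumulating rewards only after the first transition, and the reward range $\{0,1\}^k$ is preserved since $\pi^{-1}$ merely permutes coordinates of a $\{0,1\}$-vector.
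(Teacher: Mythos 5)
Your construction is correct and essentially the same as the paper's: $\cA$ is $\cT$ with all states accepting and $\rew(s)$ the characteristic vector of $\lab(s)$, and $\cB$ additionally pre-composes the transition function with $\pi$ and permutes the reward coordinates. The one divergence is the direction of that coordinate permutation: you take $\rew'(s)=\pi^{-1}(\rew(s))$, whereas the paper writes $\rew'(s)=\pi(\rew(s))$; under the paper's convention $(\pi(\vec{a}))_j=a_{\pi^{-1}(j)}$, your choice is the one that literally gives $\Pr(\cB(x)=\vec{a})=\Pr(\park(\cT(\pi(x)))=\pi(\vec{a}))$ as stated in the lemma (the paper's version yields $\pi^{-1}(\vec{a})$ instead), a slip that is harmless for the downstream equivalence check since symmetry under $\pi$ and $\pi^{-1}$ coincide, but your bookkeeping is the more careful one.
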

%
%

In \cite{kiefer2014stability}, the problems of distribution-equivalence and expected-equivalence are solves, with complexities $\NC$ and $\RNC$, respectively, where $\NC$ is the class of problems solvable using circuits of polynomial size and polylogarithmic depth, and $\RNC$ is its randomized analogue. It is known that $\NC\subseteq \P$ and $\RNC\subseteq \RP$.

The distribution-equivalence and expected-equivalence problems, applied to the automata $\cA$ and $\cB$ obtained as per \cref{lem:tran_to_PRA}, exactly correspond to $\pi$-distribution symmetry and $\pi$-expected symmetry of $\cT$, respectively. We thus have the following.
\begin{theorem}
	\label{thm:parikh_complexity}
	The problem of deciding, given an $I/O$ transducer $\cT$ and a permutation $\pi$, whether it is $\pi$-Parikh distribution symmetric (resp. $\pi$-Parikh expected symmetric), is in $\NC$ (resp. $\RNC$).
\end{theorem}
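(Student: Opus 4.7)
The plan is to directly reduce each Parikh symmetry problem to the corresponding equivalence problem for probabilistic reward automata. First I would invoke Lemma \ref{lem:tran_to_PRA} to construct, in a local (logspace-computable) fashion from $\cT$ and $\pi$, the two PRAs $\cA$ and $\cB$ over the alphabet $\tI$ with reward dimension $k=|I|$, satisfying the distributional identities $\Pr(\cA(x)=\vec{a})=\Pr(\park(\cT(x))=\vec{a})$ and $\Pr(\cB(x)=\vec{a})=\Pr(\park(\cT(\pi(x)))=\pi(\vec{a}))$ for every $x\in\Ip$ and every $\vec{a}\in\bbN^k$.

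For the distribution case, the defining condition for $\pi$-Parikh distribution symmetry, namely $\Pr(\park(\cT(x))=\vec{a})=\Pr(\park(\cT(\pi(x)))=\pi(\vec{a}))$ holding for every $x$ and every $\vec{a}$, is by the two identities of the lemma literally equivalent to $\Pr(\cA(x)=\vec{a})=\Pr(\cB(x)=\vec{a})$ for every $x$ and $\vec{a}$, i.e., to distribution-equivalence of $\cA$ and $\cB$ as PRAs. Since distribution-equivalence of PRAs lies in $\NC$ by \cite[Section 5]{kiefer2014stability}, and the construction of $\cA,\cB$ from $\cT$ is a local relabelling of states and transitions, the overall procedure remains in $\NC$.

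For the expected case, I would take expectations over the distributional identities. Using that $\pi$ acts on $\bbN^k$ by coordinate permutation (hence linearly), a routine change of variables in the summation $\sum_{\vec{a}}\vec{a}\cdot\Pr(\cB(x)=\vec{a})$ lets one match $\bbE[\rew(\cB(x))]$ with the permuted expectation appearing on the right-hand side of the $\pi$-Parikh expected symmetry condition, while $\bbE[\rew(\cA(x))]$ is directly $\bbE[\park(\cT(x))]$. Consequently $\cT$ is $\pi$-Parikh expected symmetric iff $\cA$ and $\cB$ are expected-equivalent, and the $\RNC$ procedure of \cite[Section 5]{kiefer2014stability} yields the bound.

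The conceptual content has already been absorbed into Lemma \ref{lem:tran_to_PRA}; the only thing to check for the theorem itself is that the reduction is cheap enough to preserve $\NC$ and $\RNC$. Since the transition and reward structure of $\cA$ and $\cB$ are read off from $\cT$ by a purely local rewrite (with $\pi$ applied to the input alphabet in the case of $\cB$), the reduction is logspace-computable and composes with $\NC$/$\RNC$ algorithms without increasing the complexity class, so no genuine obstacle arises beyond bookkeeping.
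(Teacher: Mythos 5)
Your proposal matches the paper's own argument: both invoke \cref{lem:tran_to_PRA} to obtain the PRAs $\cA$ and $\cB$ and then observe that $\pi$-Parikh distribution (resp.\ expected) symmetry is exactly distribution-equivalence (resp.\ expected-equivalence) of these PRAs, which is in $\NC$ (resp.\ $\RNC$) by \cite{kiefer2014stability}. Your extra remarks on the change of variables for expectations and the logspace-computability of the construction are fine elaborations of details the paper leaves implicit.
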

Both notions of Parikh symmetry can be easily shown respect composition, analogously to \cref{lem:symmetry_composition}, in that if $\cT$ is both $\pi$- and $\tau$- Parikh distribution/expected symmetric, then it is also $\pi\circ \tau$-Parikh distribution/expected symmetric. Thus, we conclude this section with the following.
\begin{theorem}
	\label{thm:parikh_complexity_group}
	The problem of deciding, given an  $I/O$ transducer $\cT$ and a finite set of generators $X=\{\pi_1,\ldots,\pi_m\}$, whether it is $\pi$-Parikh distribution symmetric (resp. $\pi$-Parikh expected symmetric) for every $\pi\in \tup{X}$, is in $\NC$ (resp. $\RNC$).
\end{theorem}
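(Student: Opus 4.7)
The plan is to follow exactly the template used for \cref{cor:deciding_symmetry_group_ptime}: first establish that both notions of Parikh symmetry are closed under permutation composition, then reduce $\tup{X}$-symmetry to checking each generator, and finally invoke \cref{thm:parikh_complexity} on the $m$ generators in parallel.

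The first step, which is the crux, is to prove an analogue of \cref{lem:symmetry_composition}: if $\cT$ is $\pi$- and $\tau$-Parikh distribution symmetric then it is $(\pi\circ\tau)$-Parikh distribution symmetric, and similarly for the expected variant. I would prove this by a two-stage substitution. Starting from $\Pr(\park(\cT(x))=\vec{a})$ and applying $\tau$-symmetry yields $\Pr(\park(\cT(\tau(x)))=\tau(\vec{a}))$; setting $x' = \tau(x)$ and $\vec{a}' = \tau(\vec{a})$ and applying $\pi$-symmetry gives $\Pr(\park(\cT(\pi(x')))=\pi(\vec{a}'))$. The convention $\pi(\vec{a})_j = a_{\pi^{-1}(j)}$ is precisely the one under which $(\pi\circ\tau)(\vec{a}) = \pi(\tau(\vec{a}))$, and analogously for words, so the two equalities chain to give the required $(\pi\circ\tau)$-symmetry. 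The expected case follows by an identical substitution, using linearity of expectation together with the linear action of $\pi$ on $\bbN^k$. An easy induction on the length of a representation of $\pi\in\tup{X}$ as a product of generators then shows that $\tup{X}$-Parikh symmetry is equivalent to $\pi_i$-Parikh symmetry for each $\pi_i\in X$.

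The second step is complexity bookkeeping. By \cref{thm:parikh_complexity}, each of the $m=|X|$ generator tests is in $\NC$ (resp.\ $\RNC$). Since both classes are closed under the conjunction of polynomially many parallel instances -- one runs the $m$ circuits side by side and ANDs the outputs, which adds only an $O(\log m)$-depth fan-in layer -- the overall procedure remains in $\NC$ (resp.\ $\RNC$).

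The main obstacle I foresee is purely notational: one must be careful with the direction of composition under the convention $\pi(\vec{a})_j = a_{\pi^{-1}(j)}$, so that the chain of substitutions produces $(\pi\circ\tau)$ rather than $(\tau\circ\pi)$. Once this bookkeeping is settled (which is immediate once one checks that $\pi\mapsto(\vec{a}\mapsto \pi(\vec{a}))$ is a genuine left action of $\cS_k$ on $\bbN^k$), the rest of the argument is routine.
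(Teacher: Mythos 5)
Your proposal matches the paper's own argument: the paper likewise notes that both Parikh-symmetry notions respect composition analogously to \cref{lem:symmetry_composition}, reduces $\tup{X}$-symmetry to the generators, and invokes \cref{thm:parikh_complexity} for each of them, with the closure of $\NC$ (resp.\ $\RNC$) under polynomially many parallel tests handling the complexity. Your composition argument and the check that $\pi(\vec{a})_j=a_{\pi^{-1}(j)}$ yields a genuine left action are correct, so the proposal is sound and essentially identical in approach.
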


\section{Qualitative Symmetry}
\label{sec:qual_sym}
\cref{sec:L_inf_approx} rules out a decidable quantitative approximation for symmetry that takes into account the order of the input (at least in the sense of \cref{thm:deciding_approx_sym_undecidable}). In lieu of such an approximation, 
we turn to study a qualitative approximation, whereby we only require that permuting the input does not alter the support of the output distribution.

Let $\cT$ be an $I/O$ transducer, and let $\pi\in \cS_k$. We say that $\cT$ is \emph{$\pi$-qualitative-symmetric} if for every $x\in \Ip$ and $y\in \Op$ we have that $\Pr(\cT(x)=y)>0$ iff $\Pr(\cT(\pi(x))=\pi(y))>0$.

Observe that for every $x$ and $y$ as above, $\Pr(\cT(x)=y)>0$ iff there exists a run of $\cT$ on $x$ that is labelled $y$. Thus, in order to study qualitative symmetry, we can ignore the concrete probabilities in $\cT$, and only keep information on whether they are positive or not. Therefore, we essentially consider a nondeterministic transducer.

Using a similar translation to that in \ref{thm:deciding_symmetry_permutation_ptime}, but to NFAs instead of PAs, we have the following.
\begin{lemma}
	\label{lem:qual_PSPACE}
	The problem of deciding, given an $I/O$ transducer $\cT$ and a permutation $\pi$, whether $\cT$ is $\pi$-qualitative-symmetric, is in \PSPACE.
	\end{lemma}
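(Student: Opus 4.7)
The plan is to reduce $\pi$-qualitative-symmetry to the equivalence problem for nondeterministic finite automata (NFAs), which is known to be decidable in \PSPACE\ (via on-the-fly complementation by subset construction). The construction mirrors that of \cref{thm:deciding_symmetry_permutation_ptime}, but discards all quantitative information and keeps only the supports of the transition distributions.

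Concretely, given $\cT=\tup{I,O,S,s_0,\delta,\lab}$ and $\pi$, I would construct two NFAs $\cN_\cA$ and $\cN_\cB$ over the alphabet $\tIO$, each having state set $S$, initial state $s_0$, and all states accepting. Since $I$ and $O$ are disjoint, every $\sigma\in\tIO$ splits uniquely as $\vec{i}\cup\vec{o}$ with $\vec{i}\in\tI$ and $\vec{o}\in\tO$, so the transition relations
\[
  p\in \eta_\cA(q,\vec{i}\cup\vec{o}) \iff \delta(q,\vec{i})(p)>0 \text{ and } \lab(p)=\vec{o},
\]
\[
  p\in \eta_\cB(q,\vec{i}\cup\vec{o}) \iff \delta(q,\pi(\vec{i}))(p)>0 \text{ and } \lab(p)=\pi(\vec{o})
\]
are well-defined. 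By construction, $x\otimes y\in L(\cN_\cA)$ iff there is a proper run of $\cT$ on $x$ whose labelling is $y$, i.e.\ iff $\Pr(\cT(x)=y)>0$, and analogously $x\otimes y\in L(\cN_\cB)$ iff $\Pr(\cT(\pi(x))=\pi(y))>0$.

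Every nonempty word in $(\tIO)^+$ is of the form $x\otimes y$ for a unique pair $(x,y)\in \Ip\times\Op$ of equal length. Hence $L(\cN_\cA)=L(\cN_\cB)$ iff $\cT$ is $\pi$-qualitative-symmetric, reducing the decision problem to NFA language equivalence. Checking $L(\cN_\cA)\subseteq L(\cN_\cB)$ (and symmetrically the reverse inclusion) can be done in \PSPACE\ by nondeterministically guessing a word letter by letter in $\cN_\cA$ while maintaining the reachable subset of states in $\cN_\cB$, and accepting when the guessed run is accepting in $\cN_\cA$ but the maintained subset of $\cN_\cB$ contains no accepting state. Since \PSPACE\ is closed under complement and conjunction, the whole procedure runs in \PSPACE.

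The main obstacle here is merely being careful about the correspondence between words of $(\tIO)^+$ and pairs $(x,y)$ of equal length, and about the fact that an NFA accepts a word whenever \emph{some} run survives, which matches exactly the qualitative support condition on $\cT$. Beyond this bookkeeping, the bound follows from the classical \PSPACE\ algorithm for NFA equivalence, with no additional quantitative reasoning required.
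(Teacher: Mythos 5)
Your proposal is correct and follows essentially the same route as the paper: both reduce $\pi$-qualitative-symmetry to language equivalence of two NFAs over $\tIO$ with all states accepting, built by taking the supports of $\cT$'s transitions (one copy permuted by $\pi$), and then invoke the \PSPACE{} bound for NFA equivalence. The only difference is that you additionally spell out the standard on-the-fly subset-construction argument for NFA equivalence, which the paper simply cites as known.
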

%
%

We proceed to show a matching lower bound.
\begin{lemma}
	\label{lem:qual_PSPACE_hard}
	The problem of deciding, given an $I/O$ transducer $\cT$ and a permutation $\pi$, whether $\cT$ is $\pi$-qualitative-symmetric, is \PSPACE-hard.
\end{lemma}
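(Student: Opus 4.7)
The plan is to reduce from \emph{NFA universality}, which is $\PSPACE$-complete. Given an NFA $N$ over $\{0,1\}$, I would build an $I/O$ transducer $\cT$ with $I=\{i_1,i_2,i_3,i_4\}$ and $O=\{o_1,o_2,o_3,o_4\}$, and take $\pi=(1\ 2)$. The design exploits that $\pi$ fixes $i_3,i_4,o_3,o_4$, so the input letters $\{i_3\},\{i_4\},\{i_1,i_2\}$ and the output letters $\emptyset,\{o_3,o_4\}$ are all $\pi$-invariant, while only the singletons $\{i_1\},\{i_2\}$ are swapped. I use the latter as a ``mode trigger'' and the invariant letters for the actual simulation.

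On input starting with $\{i_1\}$, $\cT$ enters Mode~A which faithfully simulates $N$, reading $\{i_3\},\{i_4\}$ as the encoding of $0,1$; every state used in the simulation is labelled $\emptyset$. On the invariant ``query'' letter $\{i_1,i_2\}$, from an accepting $N$-state $\cT$ branches with positive probability both to a self-looping sink $q_{\mathrm{acc}}$ labelled $\{o_3,o_4\}$ and to a self-looping sink $q_{\mathrm{rej}}$ labelled $\emptyset$; from a non-accepting state it goes only to $q_{\mathrm{rej}}$. On input starting with $\{i_2\}$, $\cT$ enters Mode~B: a single state $p_0$ labelled $\emptyset$, self-looping on $\{i_3\},\{i_4\}$, that on the query letter branches \emph{unconditionally} with positive probability to $p_{\mathrm{acc}}$ (labelled $\{o_3,o_4\}$) and $p_{\mathrm{rej}}$ (labelled $\emptyset$). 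Any letter outside the protocol, and any first letter other than $\{i_1\},\{i_2\}$, is sent to a rejecting sink labelled $\emptyset$.

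The correctness analysis splits by input form. If $x$ does not begin with $\{i_1\}$ or $\{i_2\}$, both $x$ and $\pi(x)$ plunge into the rejecting sink at step one and produce only the all-$\emptyset$ output, so qualitative symmetry holds trivially. For $x=\{i_1\}\cdot w\cdot\{i_1,i_2\}\cdot z$ with $w\in\{\{i_3\},\{i_4\}\}^*$ encoding some $u\in\{0,1\}^*$, we have $\pi(x)=\{i_2\}\cdot w\cdot\{i_1,i_2\}\cdot\pi(z)$ because $w$ and the query letter are $\pi$-invariant. Before the query, both runs emit only $\emptyset$. At the query and afterwards, Mode~A produces the invariant output $\{o_3,o_4\}$ iff some run of $N$ accepts $u$, and always admits $\emptyset$ as an option by construction; Mode~B admits both options unconditionally. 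Since $\{o_3,o_4\}$ and $\emptyset$ are $\pi$-fixed, the two output sets agree iff $N$ accepts every encoded $u$. Inputs with ``garbage'' letters interleaved between the trigger and any query drop into the respective sinks in both modes and again emit only $\emptyset$. Therefore $\cT$ is $\pi$-qualitative-symmetric iff $N$ is universal, yielding $\PSPACE$-hardness.

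The main obstacle is the careful gadget design needed to suppress spurious asymmetries: the $\emptyset$ output at the query must always be attainable in Mode~A (hence the forced branch to $q_{\mathrm{rej}}$ even from accepting $N$-states), Mode~B must expose both outputs at the query unconditionally, and every non-protocol letter must be routed to a uniform $\emptyset$-labelled sink so that the behaviour on weird inputs is trivially symmetric. Once these are in place, the sole surviving distinction between the two modes is whether the $\{o_3,o_4\}$ output is attainable at the query step, which occurs in Mode~A exactly when $N$ accepts the simulated word, giving the desired reduction.
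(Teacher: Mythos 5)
Your reduction is correct and follows the same overall strategy as the paper's proof: reduce from NFA universality over $\{0,1\}$, use the swapped singletons $\{i_1\},\{i_2\}$ as a one-shot mode selector between an NFA-simulating component and a trivially ``universal'' component, simulate the NFA only on $\pi$-invariant input letters, and label everything with $\pi$-invariant outputs so that qualitative symmetry reduces to equality of output supports. The differences are at the gadget level. The paper keeps $I=\{i_1,i_2\}$, $O=\{o_1,o_2\}$ and reduces from universality of NFAs \emph{all of whose states are accepting} (citing its \PSPACE-hardness); acceptance is then just run survival, so no query letter is needed: dead runs fall into an $\{o_1,o_2\}$-labelled sink, and the second component nondeterministically generates exactly the patterns $\emptyset^+\{o_1,o_2\}^*$. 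Your variant works from plain universality at the cost of two extra process indices ($i_3,i_4,o_3,o_4$) and an explicit query letter $\{i_1,i_2\}$ that tests reachability of an accepting state. Two small points to tighten: (i) since a probabilistic transducer's transition function must be total, you need to say where Mode~A goes when the NFA has no transition on the current simulation letter (route such dead runs to the $\emptyset$-labelled sink, or first complete the NFA with a rejecting dead state -- with either fix your support analysis goes through unchanged); (ii) the case where $x$ begins with $\{i_2\}$ should be dispatched explicitly by the standard involution argument (replace $x$ by $\pi(x)$), as the paper does. Finally, note that the paper's two-signal construction has the added benefit that the hardness immediately transfers to deciding symmetry under \emph{every} permutation of $\cS_k$ (its \cref{cor:qual_PSPACE_group}), whereas your $k=4$ construction with $\pi=(1\ 2)$ establishes the lemma as stated but not that corollary directly.
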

\begin{proof}
	We show the problem is \PSPACE-hard via a reduction from the universality problem for NFAs over alphabet $\Sigma=\{0,1\}$ whose states are all accepting. That is, the problem of deciding, given an NFA $\cA=\tup{Q,\{0,1\},\delta,q_0,Q}$ (where all states are accepting), whether $L(\cA)=\Sigma^*$. This problem was shown to be \PSPACE-hard in \cite{kao2009nfas}.
	
	The reduction has a similar flavour as that of \cref{thm:deciding_approx_sym_undecidable}, in that we use the permutation to switch between components of the transducer. The components themselves, however, are somewhat different. 
	
	Let $\cA=\tup{Q,\{0,1\},\delta,q_0,Q}$ be an NFA over $\{0,1\}$ with all states accepting. 
	We construct a transducer $\cT=\tup{I,O,S,s_0,\eta,\lab}$ over $I=\{i_1,i_2\}$ and $O=\{o_1,o_2\}$ as follows. The states are $S=Q\cup \{\sinit,\smid,s_{\bot}\}$, with the labelling $\lab(q)=\emptyset$ for every $q\in Q$, $\lab(\sinit)=\lab(\smid)=\emptyset$, and $\lab(s_\bot)=\{o_1,o_2\}$. For simplicity, we treat the transition function as nondeterministic $\eta:S\times \tIO\to 2^S$. Technically, this can be thought of as specifying the support of the transition function, with arbitrarily chosen probabilities (e.g., uniform). Note, however, that we do not allow $\emptyset$ in the image of $\delta$, since we must be able to specify probabilities for the transitions.	
	Now, for every $q\in Q$ and $\vec{i}\in \tI$, and we define 
	\[
		\eta(q,\vec{i})=\begin{cases}
			\delta(q,0)\cup\{s_\bot\} & \text{ if }\vec{i}=\emptyset\\ 
			\delta(q,1)\cup\{s_\bot\} & \text{ if }\vec{i}=\{i_1,i_2\}\\ 
			\{q_\bot\} & \text{ otherwise}
		\end{cases}
	\]
	That is, within the $Q$ component, we identify $\Sigma=\{0,1\}$ with $\{\emptyset,\{i_1,i_2\}\}$, and whenever there are no corresponding transitions in $\cA$, or an ``invalid'' letter is seen, a transition is taken to $s_{\bot}$. Note that we add transitions to $s_\bot$ even when there are transition in $\cA$, which will play a role later on.
	The remaining transitions are as follows (see~\cref{fig:reductionNFA}).\\
	\begin{tabular}{ l l }
		$\eta(\sinit,\{i_1\})=\{q_0\}$, & $\eta(\sinit,\{i_2\})=\{\smid\}$, \\ $\eta(\sinit,\emptyset)=\eta(\sinit,\{i_1,i_2\})=\{s_\bot\}$, &  $\eta(\smid,\emptyset)=\eta(\smid,\{i_1,i_2\})=\{\smid,s_\bot\}$,\\ $\eta(\smid,\{i_1\})=\eta(\smid,\{i_2\})=\{s_\bot\}$, & and $\eta(s_\bot,\sigma)=\{s_\bot\}$.
	\end{tabular} 
	\begin{figure}[ht]
		\centering
		\includegraphics[scale=.8]{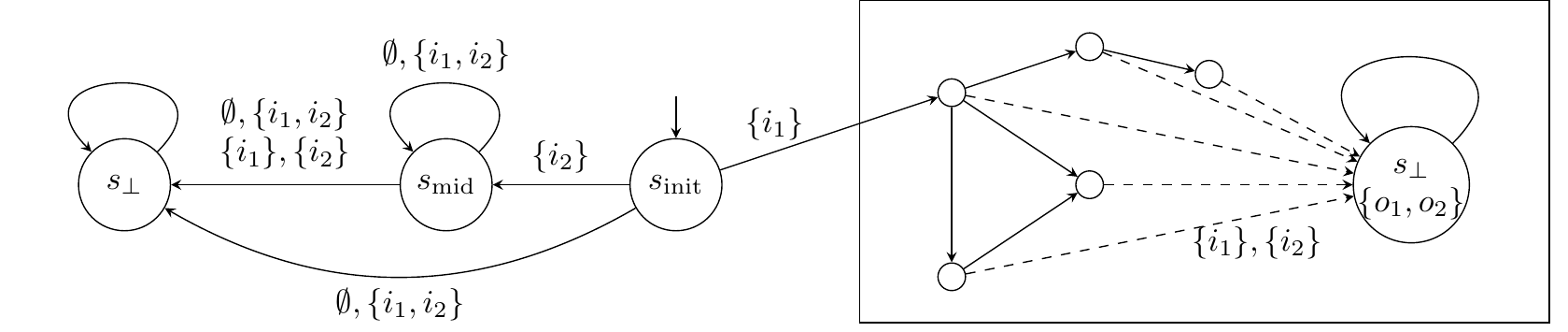}
		\caption{The transducer constructed from an NFA.}
		\label{fig:reductionNFA}
	\end{figure}

	\noindent Let $\pi=(1\ 2)$. We claim that $L(\cA)=\Sigma^*$ iff $\cT$ is $(1\ 2)$-qualitative-symmetric. 
	
	For the first direction, we prove the contrapositive. Assume $L(\cA)\neq \Sigma^*$, and let $w\in \Sigma^*\setminus L(\cA)$. Keeping our identification of $\Sigma=\{0,1\}$ with $\{\emptyset,\{i_1,i_2\}\}$, consider the word $x=\{i_1\}\cdot w$. Since there are no runs of $\cA$ on $w$, it follows that within the $Q$ component, after reading $w$, the only reachable state is $s_\bot$. Thus, if $z\in \Op$ is such that $\Pr(\cT(x)=z)>0$, then $z$ is of the form $\emptyset^+\cdot \{o_1,o_2\}^+$. In particular, let $y=\emptyset^{|w|+1}$, then $\Pr(\cT(x)=y)=0$. However, a possible run of $\cT$ on $\pi(x)$ is $\sinit,\smid^{|w|}$, which induces the labels $y=\pi(y)$. Thus, $\Pr(\cT(\pi(x))=\pi(y))>0$, so $\cT$ is not $\pi$-qualitative-symmetric.
	
	Conversely, assume that $L(\cA)=\Sigma^*$, and consider $x\in \Ip$ and $y\in \Op$. We claim that $\Pr(\cT(x)=y)>0$ iff $\Pr(\cT(\pi(x))=\pi(y))>0$. Observe that similarly to \cref{thm:deciding_approx_sym_undecidable}, all the labels on $\cT$ are invariant under $\pi$, so the above can be stated as 
	
	\vspace*{-25pt}
	\begin{equation}
	\label{eq:qual_symm}
	\Pr(\cT(x)=y)>0$ iff $\Pr(\cT(\pi(x))=y)>0.
	\end{equation}
	
	Now, if $x$ starts with either $\emptyset$ or $\{i_1,i_2\}$, then there is a single run on $x$ and on $\pi(x)$, namely $\sinit,s_\bot$, so both $x$ and $\pi(x)$ induce the same distribution on output sequences. Thus, \cref{eq:qual_symm} holds.
	
	Next, similarly to \cref{thm:deciding_approx_sym_undecidable}, we can again assume without loss of generality that $x$ starts with $\{i_1\}$, otherwise we use $\pi(x)$. Thus, $x$ is either of the form $\{i_1\}\cdot w$ or of the form $\{i_1\}\cdot w\cdot \{\{i_1\},\{i_2\}\}\cdot \Is$ with $w\in \{\emptyset,\{i_1,i_2\}\}^*$. 
	
	In the former case, recall that $\eta$ follows the transition function of $\cA$, as well as allowing at each point to reach $s_\bot$. Thus, $\cT(x)$ assigns positive probability to every word of the form $\emptyset^+\{o_1,o_2\}^*$ (of length $|w|+1$). Observe that $\pi(w)=w$, and hence $\pi(x)=\{i_2\}w$, which induces a distribution with the same support, and again \cref{eq:qual_symm} holds.
	
	In the latter case, $x$ is of the form $\{i_1\}\cdot w\cdot \{\{i_1\},\{i_2\}\}\cdot \Is$, where upon reading either $\{i_1\}$ or $\{i_2\}$, the runs in the $Q$ component all collapse to $s_\bot$. Thus, the support of $\cT(x)$ comprises words of the form $\emptyset^+\{o_1,o_2\}^*$ where the $\emptyset^+$ prefix is at most of length $|w|+1$. Since $\pi(\{i_1\})=\{i_2\}$ and $\pi(\{i_2\})=\{i_1\}$, then by the definition of $\eta$, the distribution $\cT(\pi(x))$ has the same support (as runs that remain in $\smid$ collapse to $s_\bot$ at the same stage). We thus conclude the claim.
	Finally, it is easy to see that the reduction is polynomial.
\end{proof}

\vspace*{-7pt}
Combining \cref{lem:qual_PSPACE,lem:qual_PSPACE_hard}, we have the following.
\begin{theorem}
	\label{thm:qual_PSPACE_complete}
	The problem of deciding, given an $I/O$ transducer $\cT$ and a permutation $\pi$, whether $\cT$ is $\pi$-qualitative-symmetric, is \PSPACE-complete.
\end{theorem}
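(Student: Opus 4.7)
The theorem follows immediately by combining \cref{lem:qual_PSPACE} (the \PSPACE{} upper bound) and \cref{lem:qual_PSPACE_hard} (the matching \PSPACE{}-hardness lower bound), whose proof is already given. I therefore focus my plan on the upper bound, which is only sketched as ``analogous to \cref{thm:deciding_symmetry_permutation_ptime}''.

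The approach for the upper bound is to reduce $\pi$-qualitative-symmetry to language equivalence of nondeterministic finite automata. The starting observation is that $\Pr(\cT(x)=y)>0$ iff there exists a proper run $\rho\in\runs(\cT,x)$ with $\lab(\rho)=y$, i.e., a purely nondeterministic reachability condition. Hence the concrete probabilities in $\cT$ play no role, only the \emph{support} of each $\delta(q,\vec{i})$ matters. I would therefore replay the PA construction from the proof of \cref{thm:deciding_symmetry_permutation_ptime}, but replace the probabilistic transition functions $\eta$ and $\zeta$ by their nondeterministic supports.

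Concretely, I would construct two NFAs $\cA$ and $\cB$ over the alphabet $\tIO$, with state space $S\cup\{q_\bot\}$, initial state $s_0$, and all states of $S$ accepting. For $q\in S$ and $\sigma=\vec{i}\cup\vec{o}$ with $\vec{i}\in\tI$, $\vec{o}\in\tO$, the transitions of $\cA$ are $\cA(q,\sigma)=\{p\in S\ST \lab(p)=\vec{o},\ p\in\supp(\delta(q,\vec{i}))\}$, augmented with a transition to the rejecting sink $q_\bot$ whenever this set is empty (to keep the automaton total). The NFA $\cB$ is built symmetrically but using $\delta(q,\pi(\vec{i}))$ and the label-test $\lab(p)=\pi(\vec{o})$, mirroring the role of $\zeta$ in \cref{thm:deciding_symmetry_permutation_ptime}. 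A routine induction on run length shows that $x\otimes y\in L(\cA)$ iff $\Pr(\cT(x)=y)>0$, and $x\otimes y\in L(\cB)$ iff $\Pr(\cT(\pi(x))=\pi(y))>0$. Consequently $\cT$ is $\pi$-qualitative-symmetric iff $L(\cA)=L(\cB)$, and equivalence of NFAs is decidable in \PSPACE{} (via on-the-fly determinization and bisimulation on subsets, or by reduction to universality). This gives the \PSPACE{} upper bound, and together with \cref{lem:qual_PSPACE_hard} completes the proof.

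The only delicate point I expect is bookkeeping: making sure that $\cA$ and $\cB$ agree on the length restriction (that $|x|=|y|$ is enforced by the single-letter encoding $x\otimes y$), that $q_\bot$ is correctly a rejecting sink in both automata so neither admits spurious words, and that the permutation $\pi$ is applied consistently to both the $\tI$- and $\tO$-parts of each letter in $\tIO$. Beyond that, everything reduces to the classical \PSPACE{} procedure for NFA equivalence, so no new techniques are required.
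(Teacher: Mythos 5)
Your proposal is correct and follows essentially the same route as the paper: the theorem is obtained by combining \cref{lem:qual_PSPACE} and \cref{lem:qual_PSPACE_hard}, and your NFA construction for the upper bound (support-based transitions over $\tIO$, all states of $S$ accepting, with $\cB$ applying $\pi$ to both the input and output parts, followed by \PSPACE{} NFA equivalence) matches the paper's appendix proof of \cref{lem:qual_PSPACE}. The only cosmetic difference is your added rejecting sink $q_\bot$ for totality, which the paper omits since NFAs may simply have no matching transition.
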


As in \cref{sec:approx_sym}, since we use the permutation group $\cS_2$ for our hardness result, we have the following.
\begin{corollary}
\label{cor:qual_PSPACE_group}
	The problem of deciding whether a given $I/O$ transducer $\cT$ is $\pi$-qualitative-symmetric for every $\pi\in \cS_k$ is \PSPACE-complete.
\end{corollary}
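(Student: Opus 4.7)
The plan is to reuse the machinery already developed: establish that qualitative symmetry composes, combine this with \cref{lem:qual_PSPACE} for the upper bound, and inherit the lower bound from \cref{lem:qual_PSPACE_hard} via the special case $k=2$.

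For the \PSPACE upper bound, I would first prove a qualitative analogue of \cref{lem:symmetry_composition}: if $\cT$ is $\pi$-qualitative-symmetric and $\tau$-qualitative-symmetric, then it is $(\pi\circ\tau)$-qualitative-symmetric. The argument is immediate by chaining the biconditionals: for any $x\in\Ip$ and $y\in\Op$, we have $\Pr(\cT(x)=y)>0$ iff $\Pr(\cT(\tau(x))=\tau(y))>0$ iff $\Pr(\cT(\pi(\tau(x)))=\pi(\tau(y)))>0$, and the last expression equals $\Pr(\cT((\pi\circ\tau)(x))=(\pi\circ\tau)(y))$. By exactly the same reasoning as in \cref{cor:symmetry_group_iff_generator}, this implies that $\cT$ is $\cS_k$-qualitative-symmetric iff it is $\pi$-qualitative-symmetric for each of the two generators $(1\ 2)$ and $(1\ 2\ \ldots\ k)$ of $\cS_k$. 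We can therefore decide $\cS_k$-qualitative-symmetry by invoking the \PSPACE procedure of \cref{lem:qual_PSPACE} twice, once per generator, yielding a \PSPACE algorithm.

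For the \PSPACE lower bound, I would observe, as the paragraph preceding the corollary suggests, that the hardness proof of \cref{lem:qual_PSPACE_hard} takes $k=2$ and uses the single non-trivial permutation $\pi=(1\ 2)$. Since $\cS_2=\{\mathrm{id},(1\ 2)\}$ and every transducer is trivially $\mathrm{id}$-qualitative-symmetric, a transducer $\cT$ over $I=\{i_1,i_2\}$ and $O=\{o_1,o_2\}$ is $\cS_2$-qualitative-symmetric iff it is $(1\ 2)$-qualitative-symmetric. Thus the reduction in the proof of \cref{lem:qual_PSPACE_hard} from NFA universality already reduces into the $\cS_k$-qualitative-symmetry problem (for $k=2$), giving the matching \PSPACE-hardness.

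The main potential obstacle is checking that the composition argument genuinely works in the qualitative setting; fortunately, this is far easier than in the quantitative case of \cref{rmk:composability_approx}, since ``strictly positive probability'' is preserved exactly, with no additive error to accumulate. Once composition is in hand, the corollary follows by assembling the two bounds.
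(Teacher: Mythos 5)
Your proposal is correct and matches the paper's (largely implicit) argument: hardness is inherited from the reduction of \cref{lem:qual_PSPACE_hard}, which already uses $\cS_2$ with its single non-trivial permutation $(1\ 2)$, and membership follows from \cref{lem:qual_PSPACE} together with the qualitative analogue of \cref{lem:symmetry_composition}, so that only the two generators of $\cS_k$ need to be checked. Nothing further is needed.
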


\section{Extensions and Research Directions}
\paragraph*{Extensions}
The setting considered thus far restricts to corresponding input and output sets of the form $I=\{i_1,\ldots,i_k\}$ and $O=\{o_1,\ldots,o_k\}$. Typically, however, systems also include signals that are not process-specific, such as whether the system is ready, whether there is an error, etc. We can easily incorporate these into the setting. Indeed, adding input signals that are ignored by permutations can be inserted \emph{mutatis-mutandis} to all the automata constructions we use. In addition, the lower bounds trivially carry over. 

In addition, some systems have multiple sets of inputs and/or output signals that belong to processes, such as read grants and write grants, both of which are process-specific outputs. Again, our framework can easily be fit with this extension, by permuting each collection of process-specific inputs or outputs separately.

\vspace*{-10pt}
\paragraph*{Research Directions}
Process symmetry often arises in model checking, and exploiting it correctly can significantly reduce the size of specifications (and hence the time spent in model checking), as well as give insight into the behaviour of the system. In this work, we introduce several variants of process symmetry, and study their algorithmic aspects. Specifically, we show that exact symmetry can be decided in polynomial time, whereas the approximate version via the $L_\infty$ metric becomes undecidable. A coarser, qualitative approximation, can be decided in $\PSPACE$. In addition, a different type of symmetry, which looks only at the Parikh image of the output, can be decided efficiently.

The notions of symmetry studied in this work restrict to either letter-by-letter symmetry, or Parikh symmetry. However, many other directions can exploit the structure of words as temporal objects to define other symmetry measures. These include \emph{eventual symmetry}, where we require symmetry to take place only after a finite prefix, \emph{sliding-window symmetry}, where we look at Parikh images within a sliding window, while requiring window-by-window symmetry, as well as notions of symmetry that are only relevant for infinite words, such as the limit-average Parikh image.

\bibliography{main}

\section{Proofs}
\subsection{Proof of \cref{lem:symmetry_composition}}
	Consider $x\in \Ip$ and $y\in \Ip$, we wish to show that $\Pr(\cT(x)=y)=\Pr(\cT(\pi(\tau(x)))=\pi(\tau (y)))$. Since $\cT$ is $\tau$-symmetric, then $\Pr(\cT(x)=y)=\Pr(\cT(\tau(x))=\tau (y))$. 
	Next, since $\cT$ is $\pi$-symmetric, then applying the definition for the input $\tau(x)\in \Ip$ and $\tau(y)\in \Op$, we have that $\Pr(\cT(\tau(x))=\tau (y))= \Pr(\cT(\pi(\tau(x)))=\pi(\tau(y)))$, and so overall $\Pr(\cT(x)=y)=\Pr(\cT(\pi(\tau(x)))=\pi(\tau(y)))$ and we are done.
\qed

\subsection{Proof of \cref{lem:tran_to_PRA}}
	The translation is similar to the one given in the proof of \cref{thm:deciding_symmetry_permutation_ptime}, where instead of adding $\tO$ to the alphabet, we collate the Parikh image using the rewards. 
	
	Let $\cT=\tup{I,O,S,s_0,\delta,\lab}$, we construct $\cA=\tup{S,\tI,\delta,s_0,S}$ with the following reward function: for every $s\in S$ and $1\le j\le k$, we have $\rew(s)_j=1$ if $o_j\in \lab(s)$ and $\rew(s)_j=0$ otherwise (that is, $\rew(s)$ is the characteristic vector of $\lab(s)$). Thus, $\cA$ is identical to $\cT$, where we treat all states as accepting, and replace output labels with their characteristic vectors.
	
	The construction of $\cB$ is similar, but accounts for the permutation $\pi$: we define $\cB=\tup{S,\tI,\mu,s_0,S}$ with reward function $\rew'$, where $\mu(s,\vec{i})=\delta(s,\pi(\vec{i}))$ for every state $s\in S$ and $\vec{i}\in \tI$, and $\rew'(s)=\pi(\rew(s))$ (where $\rew$ is the reward function of $\cA$). It is easy to see that the construction of $\cA$ and $\cB$ satisfies the conditions of the lemma.

\subsection{Proof of \cref{lem:qual_PSPACE}}
	Similarly to our approach in \cref{thm:deciding_symmetry_permutation_ptime}, we translate $\cT$ to two automata $\cA$ and $\cB$, where $\cA$ mimics the operation of $\cT$, and $\cB$ works similarly, but under the permutation $\pi$. Then, we check the equivalence of $\cA$ and $\cB$. Instead of using PAs, however, we now use nondeterministic automata (NFAs). 
	An NFA is $\cN=\tup{Q,\Sigma,\delta,q_0,F}$ where $Q$ is a set of states, $\Sigma$ is an alphabet, $\delta:Q\times \Sigma\to 2^Q$ is a transition function, $q_0$ is an initial state, and $F$ are the accepting states. The semantics of NFAs are textbook standard.
	
	Let $\cT=\tup{I,O,S,s_0,\delta,\lab}$. We define $\cA=\tup{S,\tIO,\eta,s_0,S}$ and $\cB=\tup{S,\tIO,\zeta,s_0,S}$, where the transition functions are defined as follows. 
	Let $q\in S$ and $\sigma=\vec{i}\cup\vec{o}$ with $\vec{i}\in \tI$ and $\vec{o}\in \tO$, then 
	$\eta(q,\sigma)=\{p\in S\ST \delta(q,\vec{i})(p)>0 \mbox{ and }\lab(p)=\vec{o}\}$ and 	$\zeta(q,\sigma)=\{p\in S\ST \delta(q,\pi(\vec{i}))(p)>0 \mbox{ and }\lab(p)=\pi(\vec{o})\}$.
	
	By construction, for every $x\in \Ip$ and $y\in \Op$ we have that $\Pr(\cT(x)=y)>0$ iff $\cA$ accepts $x\otimes y$, and $\Pr(\cT(\pi(x))=\pi(y))$ iff $\cB$ accepts $x\otimes y$. Thus, we have that $\cT$ is $\pi$-qualitative-symmetric iff $L(\cA)=L(\cB)$. Since equivalence of NFAs can be checked in \PSPACE, we are done.
	
\subsection{Correctness proof of \cref{thm:deciding_approx_sym_undecidable}}
\label{apx:correctness}
For the first direction, let $x'\in \{\emptyset,\{i_2\}\}^*$ such that $\cA(x')> \lambda$, and consider the word $x=\{i_2\}\cdot x'\cdot \{i_1,i_2\}$. By the construction of $\cT$, after seeing $\{i_2\}$, there is only a single run of $\cT$ which proceeds to $q_0$. From there, $\cT$ mimics the behaviour of $\cA$ on $x'$. Thus, after reading $x'$, the distribution of states has probability $\cA(x)$ for states in $F$, and probability $1-\cA(x)$ in states in $Q\setminus F$. Note that up until then, only the label $\emptyset$ is seen, so the distribution of outputs is $\dirac{\emptyset^{|x'|+1}}$. Then, after reading $\{i_1,i_2\}$, the distribution of outputs give probability $\cA(x)$ to $\emptyset^{|x'|+2}$, and $1-\cA(x)$ to $\emptyset^{|x'|+1}\cdot \{o_1,o_2\}$. 

Now consider $\pi(x)=\{i_1\}\cdot \pi(x')\cdot \{i_1,i_2\}$. Upon reading $\{i_1\}$, the single run of $\cT$ arrives at $\smid$. Then, since $x'\in \{\emptyset,\{i_2\}\}^*$, we have that $\pi(x')\in \{\emptyset,\{i_1\}\}^*$, so the run of $\cT$ stays in $\smid$. Finally, reading $\{i_1,i_2\}$, the run moves to $s_{\bot}$. Therefore $\cT(x)$ gives probability 1 to the output $\emptyset^{|x'|+1}\{o_1,o_2\}$. Thus, for the output $y=\emptyset^{|x'|+2}$, we have that $|\Pr(\cT(x)=y)-\Pr(\cT(\pi(x))=y)|=|\cA(x)-0|>\lambda=\epsilon$, so $\cT$ is not $(\epsilon,\pi)$-symmetric.

For the converse direction, assume $x,y$ are such that $|\Pr(\cT(x)=y)-\Pr(\cT(\pi(x))=y)|>\epsilon$. 
We start by eliminating candidates for such $x$ and $y$. 
First, observe that if $x$ starts with $\emptyset$ or $\{\i_1,\o_1\}$ (both of which are invariant under $\pi$), we have $\cT(x)$ gives probability $1$ to the output $\lab(q_{\bot})^{|x|}=\{o_1,o_2\}^{|x|}$, and so $\cT(x)=\cT(\pi(x))$, hence $|\Pr(\cT(x)=y)-\Pr(\cT(\pi(x))=y)|=0$ for all $y$, so this case cannot occur.

Next, we claim that without loss of generality, we can assume $x$ starts with $\{i_2\}$. Indeed, if $x$ starts with $\{i_1\}$, then $\pi(x)$ starts with $\{i_2\}$. Since $\pi(\pi(x))=x$, we could start the argument with $\pi(x)$, while maintaining \cref{eq:undec_eq}.

Now, if $x$ is of the form $\{i_2\}\cdot \{\emptyset,\{i_2\}\}^n$, then $\cT(x)$ gives probability 1 to the output $\emptyset^{n+1}$, but $\pi(x)$ is now of the form $\{i_1\}\cdot \{\emptyset,\{i_1\}\}^n$, which also induces the same distribution, this case cannot occur as well.

It follows that $x$ is of the form $\{i_2\}\cdot x' \cdot \{\{i_1\},\{i_1,i_2\}\}\cdot \Is $ where $x'\in \{\emptyset,\{i_2\}\}^n$. 
We claim that $\cA(x')>\lambda$. Indeed, as we observed above, $\cT(x)$ gives probability $\cA(x')$ to the output $\emptyset^{|x|}$ and probability $1-\cA(x')$ to the output $\emptyset^{|x'|+1}\cdot \{o_1,o_2\}^{|x|-|x'|-1}$. However, $\cT(\pi(x))$ gives probability $1$ to the output $\emptyset^{|x'|+1}\cdot \{o_1,o_2\}^{|x|-|x'|-1}$. Thus, there are only two possibilities for $y$ in order for \cref{eq:undec_eq} to hold: if $y=\emptyset^{|x|}$, we have
\[
\lambda=\epsilon< |\Pr(\cT(x)=y)-\Pr(\cT(\pi(x))=y)|=|\cA(x')-0|=\cA(x')
\]
and if $y=\emptyset^{|x'|+1}\cdot \{o_1,o_2\}^{|x|-|x'|-1}$, then 
\[
\lambda=\epsilon< |\Pr(\cT(x)=y)-\Pr(\cT(\pi(x))=y)|=|1-\cA(x')-1|=\cA(x')
\]	
So in either case $\cA(x')>\lambda$, and we are done.

\end{document}